\documentclass{article}

\usepackage{arxiv}
\usepackage[utf8]{inputenc} 
\usepackage[T1]{fontenc}    
\usepackage{hyperref}       
\usepackage{url}            
\usepackage{booktabs}       
\usepackage{amsfonts,amsmath,amsthm}       
\usepackage{nicefrac}       
\usepackage{microtype}      
\usepackage{lipsum}		
\usepackage{graphicx}
\usepackage{natbib}
\usepackage{doi}
\usepackage[skins]{tcolorbox}
\usepackage{tikz}
\usepackage{tkz-euclide}
\usepackage{pgfplots}
\pgfplotsset{compat=1.17}
\usepgfplotslibrary{fillbetween}
\usetikzlibrary{shapes.geometric, arrows.meta, positioning, fadings, decorations.pathmorphing, backgrounds, shadows,patterns,calc,3d,bending}
\usepackage{adjustbox,subcaption}
\usepackage{amssymb}

\newtheorem{definition}{Definition}

\newtheorem{theorem}{Theorem}

\newtheorem{proposition}{Proposition}

\newtheorem{remark}{Remark}

\title{Metric-Topology Factorization: A Computational Framework for Hippocampal-Neocortical Intelligence}

\author{ \href{https://orcid.org/0000-0003-2067-2763}{\includegraphics[scale=0.06]{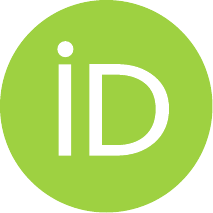}\hspace{1mm}Xin Li}\thanks{This work was partially supported by NSF IIS-2401748 and BCS-2401398. The author has utilized several GenAI models to aid in the development of theoretical ideas and the generation of visual illustrations presented in this paper.} \\ Department of Computer Science\\ University at Albany\\ Albany, NY 12222 \\ \texttt{xli48@albany.edu} }

\date{}

\hypersetup{
pdftitle={A template for the arxiv style},
pdfsubject={q-bio.NC, q-bio.QM},
pdfauthor={David S.~Hippocampus, Elias D.~Striatum},
pdfkeywords={First keyword, Second keyword, More},
}

\begin{document}
\maketitle

\begin{abstract}
How does the brain achieve stability and plasticity in a world whose semantic structure is topologically complex and continually shifting?
We propose that intelligence arises from a computational separation between topology and metric structure, formalized as \emph{Metric-Topology Factorization (MTF)}.
We argue that semantically complex environments cannot, in general, be represented by a single globally contractive geometry.
This \emph{geometric incompleteness} implies that any fixed smooth representational metric inevitably encounters saddle-type obstructions under topological shifts.
Intelligence therefore requires a factorization between discrete \emph{topological indexing}, which selects contextual regime, and continuous \emph{metric condensation}, which induces locally contractive inference within that regime.
In the brain, this separation corresponds to a hippocampal–neocortical division of labor embedded in a hierarchical cortical architecture.
The hippocampus generates sparse contextual signatures that index global manifold identity, while the neocortex implements sequential geometric untangling.
Along the ventral stream, a dynamic-programming(DP)-like hierarchy progressively quotients nuisance symmetries (translation, scale, rotation), converting a non-convex sensory ``maze'' into a linearly separable representational ``bowl.''
Replay and consolidation amortize these transport-like transformations offline, enabling rapid geometric switching rather than destructive re-optimization during task shifts.
We further interpret dreaming as the topological sampling phase of memory-amortized inference.
During REM sleep, stochastic hippocampal traversal of the cognitive graph exposes latent manifold structure, allowing cortical metrics to be regularized and recalibrated.
Consciousness corresponds to the online resolution of topological uncertainty into locally stable geometric inference: states enter awareness when their embedding is not yet fully amortized.
Finally, we view five major evolutionary transitions- namely sensorimotor control, allocentric navigation, episodic memory, social cognition, and language, as successive expansions of topological complexity requiring increasingly sophisticated separation between indexing and metric shaping.
Across wakefulness, sleep, and evolution, intelligence emerges not as deeper search, but as the controlled deployment and continual recalibration of context-specific geometries that transform global navigation problems into locally stable dynamics.
\end{abstract}

\section{Introduction}

How does the brain remain stable enough to preserve lifelong memories
while remaining plastic enough to learn new environments, social structures,
and languages \cite{buzsaki_rhythms_2006}?
The fundamental stability-plasticity dilemma has been central to computational neuroscience \cite{trappenberg2009fundamentals},
appearing in studies of hippocampal remapping \cite{buzsaki_hippocampo-neocortical_1996},
cortical consolidation \cite{squire_memory_2015},
catastrophic interference \cite{mccloskey_catastrophic_1989},
and systems-level replay.
Nevertheless, a striking asymmetry persists:
the dilemma appears substantially more severe in \emph{artificial neural networks}
than in biological systems \cite{kirkpatrick_overcoming_2017}.
Why do weight-based artificial systems readily overwrite prior knowledge,
while biological brains routinely incorporate new contexts
without erasing old ones?
We propose that the answer lies in a geometric constraint.
When the semantic state space of an organism becomes topologically complex,
no single smooth representational geometry can support globally
funnel-shaped inference.
Purely metric learning, reshaping a single energy landscape,
cannot eliminate separatrices and basin boundaries that arise from
nontrivial topology \cite{kulis2013metric}.
We formalize this limitation as a form of \emph{geometric incompleteness}:
for manifolds with nonvanishing intermediate homology,
any smooth energy function must contain saddle-type obstructions \cite{milnor1963morse}.
Under such conditions, failures of learning reflect not insufficient optimization,
but a mismatch between topological structure and metric representation.


Biological intelligence appears to resolve this constraint
through architectural factorization \cite{mcclelland_why_1995}.
The hippocampus performs rapid, sparse context discrimination and remapping,
while the neocortex supports gradual abstraction and consolidation
\cite{buzsaki_memory_2013}.
Rather than viewing this division as a mere speed–capacity tradeoff,
we interpret it as a separation between
\emph{topological indexing} and \emph{metric condensation}.
The hippocampus generates low-dimensional signatures that identify
the global context or orientation of experience \cite{teyler1986hippocampal}.
The neocortex, conditioned on this index,
shapes a locally contractive geometry that supports efficient inference
within the selected context \cite{mountcastle_columnar_1997}.
Replay and systems consolidation refine these context-indexed geometries,
implementing a form of memory-amortized inference (MAI)
that gradually converts high-entropy exploration
into low-entropy geometric stability.
Under the MAI framework \cite{li_beyond_2025},
contextual remapping and parity-like alternations
are handled not by overwriting a single global representation,
but by switching between discrete geometric regimes.
Catastrophic interference \cite{mccloskey_catastrophic_1989} becomes a symptom
of missing topological factorization
rather than an inevitable consequence of distributed representation.
Intelligence does not scale by deepening search,
but by expanding the capacity to identify \emph{which} structure one is in
before optimizing \emph{within} it \cite{li2026beyond}.

We extend this perspective to dreaming \cite{hobson1975sleep,hobson1977brain,revonsuo2000reinterpretation} and consciousness \cite{tononi_integrated_2016,tononi2004information,lycan1996consciousness}.
If wakefulness emphasizes metric stability within indexed contexts,
rapid eye movement (REM) sleep appears to temporarily relax metric constraints,
allowing stochastic traversal of the hippocampal cognitive graph \cite{muller_hippocampus_1996}.
We propose that dreams implement topological sampling:
a random-walk-like exploration decoupled from immediate sensory input and motor output.
The random exploration allows the neocortex to regularize and recalibrate
its metric tensor, preventing overfitting to recent experience \cite{buzsaki_hippocampo-neocortical_1996}.
In this sense, dreaming is not epiphenomenal,
but a necessary maintenance phase for geometric factorization \cite{hinton__1995}.
Consciousness, in turn, corresponds to the real-time execution
of metric amortization \cite{mashour_conscious_2020}:
the active reduction of topological uncertainty
into locally contractive inference centered on a stable self-referential basis.
What enters awareness are precisely those states
whose geometry is still being resolved.
Fully amortized processes recede into automaticity;
unresolved topological tensions become conscious content \cite{baars1993cognitive}.

Finally, we argue that major evolutionary advances in intelligence
can be interpreted as successive expansions of this principle \cite{bennett_brief_2023}.
Early sensorimotor systems implement local metric contraction
through closed-loop control.
Allocentric navigation introduces explicit topological indexing
via hippocampal cognitive maps \cite{tolman_cognitive_1948}.
Episodic memory and replay enable amortized refinement of
context-specific geometries \cite{tulving_episodic_2002}.
Social cognition scales indexing to multi-agent relational state spaces \cite{frith_social_2007},
and language introduces portable, compositional keys that permit
cross-individual transfer of condensed structure \cite{chomsky_three_1956}.
Each stage corresponds to increasing topological complexity
and a corresponding enhancement of indexing and geometric control mechanisms.
Across these transitions,
the same structure-before-specificity logic recurs:
identify the manifold first, then shape its metric.
Therefore, the stability–plasticity dilemma,
the evolutionary scaling of intelligence,
and the phenomenology of awareness
can be viewed as different expressions
of the same geometric principle.

\paragraph{Summary of Contributions.}
This work makes five main contributions.

\begin{enumerate}

    \item \textbf{Geometric Reformulation of the Stability–Plasticity Dilemma.}
    We formalize \emph{geometric incompleteness},
    showing that semantically complex manifolds generically induce
    saddle-type obstructions in any structurally stable smooth energy landscape.
    This reframes catastrophic interference
    as a structural consequence of topology.

    \item \textbf{Metric–Topology Factorization (MTF).}
    We propose a computational principle in which
    discrete topological indexing (context identification)
    is separated from continuous metric condensation
    (local geometric shaping),
    enabling reusable inference across incompatible regimes.

    \item \textbf{Biological Mapping to Hippocampal–Neocortical Architecture.}
    We interpret hippocampal remapping
    as topological indexing
    and cortical consolidation as metric condensation.
    Replay becomes memory-amortized refinement
    of context-indexed geometries.

    \item \textbf{A Geometric Account of Dreaming and Consciousness.}
    We propose that REM sleep implements stochastic
    topological sampling that regularizes metric structure,
    while consciousness reflects ongoing metric amortization
    of unresolved topological uncertainty.

     \item \textbf{Evolutionary Scaling of Intelligence.}
    We show that five major evolutionary breakthroughs,
    sensorimotor control, allocentric navigation,
    episodic memory, social cognition, and language,
    can be understood as expansions of indexing capacity
    and geometric control over increasingly complex state spaces.

\end{enumerate}


\section{Background and Motivation}

\subsection{Thermodynamic and Evolutionary Constraints on Complementary Learning Systems}
\label{sec:thermo_evo}

Why would biological intelligence adopt a separation between two complementary learnings systems \cite{mcclelland_why_1995}: hippocampus for topological indexing and neocortex for metric condensation?
We argue that this architectural pattern
is not arbitrary,
but reflects fundamental thermodynamic
and evolutionary constraints \cite{gould1982exaptation}.

\paragraph{Energy minimization and computational cost.}

Neural computation is metabolically expensive \cite{dayan_theoretical_2001}.
Action potentials, synaptic transmission,
and plasticity consume substantial ATP.
Brute-force search over high-dimensional,
non-convex state spaces
requires sustained active computation \cite{friston_action_2010}:
maintaining competing hypotheses,
simulating alternative trajectories,
and resolving uncertainty in real time.
Such operations scale poorly with dimensionality
and rapidly increase energetic demand \cite{bellman_dynamic_1966}.
A more efficient strategy is to shift part of this computational burden
offline, converting experience into structure that is available \emph{before}
the next problem arrives. Borrowing the term from complexity theory, we call
such pre-positioned, instance-independent structure \emph{advice}
\cite{arora2009computational}. Advice is not a repackaging of time and space:
it is provably non-fungible with them. Nonuniform advice decides languages
that no time or space budget whatsoever can decide -- the class
$\mathrm{P/poly}$ contains undecidable languages, since a unary encoding of
the halting problem is settled by a single advice bit per input length
\cite{arora2009computational}. Consolidation and replay are therefore not a
space-for-time exchange but the \emph{manufacture of advice}: they convert
spent time and space into structure that later inference consults for free
\cite{gershman_amortized_2014}.
Rather than repeatedly solving a complex search problem at inference time,
the system can gradually embed invariant structure
into its synaptic organization \cite{hebb2005organization}.
Over experience, regularities of the environment
become encoded in the geometry of neural connectivity.
Subsequent inference can then proceed through fast,
approximately gradient-like dynamics
within a pre-structured representation space \cite{geman_neural_1992}.
In this way, the energetic cost of resolving structure
is paid during learning and consolidation,
reducing the metabolic burden of future decisions.

\paragraph{Time pressure and survival.}

Evolution imposes a second constraint \cite{bennett_brief_2023,daw2005uncertainty}:
decisions must often be made under severe temporal limits.
Predator avoidance, motor coordination,
and social interaction require responses
on timescales that preclude extended deliberation or simulation.
Sequential evaluation of alternative trajectories
is too slow when survival depends on rapid action.
Under such conditions,
organisms benefit from pre-structured internal models
that allow behavior to converge quickly toward appropriate responses.
Rather than computing solutions from scratch in real time,
the nervous system can embed environmental regularities
into long-term synaptic organization
through experience-dependent plasticity
and systems consolidation \cite{frank2006sleep,tononi2003sleep}.
Replay during sleep and quiet wakefulness
appears to refine these internal models offline,
when sensory demands are reduced \cite{wilson1994reactivation}.
As a result, online behavior can operate in a low-latency regime,
where actions follow rapidly from structured neural dynamics
rather than from active search.
Evolution thus favors mechanisms that exchange real-time computation
for structured representation:
what would otherwise require slow sequential exploration
is replaced by fast convergence within a learned internal geometry.

\paragraph{Continuity of the physical world and Exaptation}

The physical environment is organized as a continuous dynamical system.
Spatial relations, trajectories, and transformations
are primary features of the world,
whereas discrete object boundaries
are emergent regularities inferred by observers.
Early nervous systems evolved
to solve spatial navigation problems \cite{tolman_cognitive_1948,okeefe_hippocampus_1978}:
locating resources,
avoiding threats,
and coordinating movement.
These functions require representing
continuous spatial manifolds
and computing trajectories within them \cite{muller_hippocampus_1996}.
As perceptual and social complexity increased,
evolution appears to have reused
these spatial computation mechanisms
for higher-dimensional domains \cite{behrens_what_2018}.
Object identity, social roles,
and abstract concepts
can be represented as coordinates
within learned metric spaces \cite{epstein2017cognitive}.
Categorization becomes a form of navigation
to stable regions of semantic space.
Rather than inventing a new computational principle,
evolution extended existing hippocampal–cortical loops
to increasingly abstract domains.
This process is consistent with exaptation \cite{gould1982exaptation},
in which biological systems repurpose established mechanisms
for new functions without redesigning them from first principles.
Circuits that originally evolved to support spatial navigation,
path integration, and relational mapping \cite{okeefe_hippocampus_1978}
could be reutilized to organize non-spatial information.
As sensory systems expanded and social and symbolic environments grew more complex,
the same relational machinery that once tracked locations in physical space
could encode positions in abstract spaces of objects, tasks, and meanings \cite{behrens_what_2018}.
Empirical evidence suggests that neural systems supporting navigation
also represent continuous abstract dimensions,
such as conceptual or relational variables,
using spatial-like coding principles \cite{constantinescu2016organizing}.

\paragraph{Intelligence and reuse of spatial machinery.}

From an evolutionary perspective,
the computational machinery for spatial navigation
constitutes an ancestral substrate of cognition
\cite{okeefe_hippocampus_1978,buzsaki_rhythms_2006}.
The hippocampal–entorhinal system constructs relational maps
that support flexible navigation and inference
\cite{moser_place_2008,behrens_what_2018}.
When mammals later evolved increasingly complex sensory systems,
including high-resolution vision,
evolution did not need to invent a fundamentally new
``categorization engine.''
Instead, high-dimensional sensory signals
could be routed into pre-existing relational circuits.
Object identity could then be represented
as a stable position within an abstract relational space.
Growing evidence indicates that neural systems originally involved in spatial navigation
also encode non-spatial and conceptual dimensions
using spatial-like representational principles
\cite{constantinescu2016organizing,aronov2017mapping}.
In this view, continuous relational structure (``where'')
precedes discrete identity (``what'') \cite{ungerleider_whatand_1994}.
The brain first organizes experience
in terms of structured spatial and relational dynamics,
and invariant clusters gradually emerge within this space,
giving rise to stable object representations.
Discrete concepts are therefore not primitive symbolic units,
but fixed points within an underlying relational geometry.
Categorization can thus be understood as a reuse
of spatial navigation machinery:
the brain treats identities as locations it can reach.
This evolutionary reuse of spatial machinery suggests that the central computational challenge of intelligence can be reframed geometrically \cite{nakahara_geometry_2018}: how does the brain transform a tangled relational landscape into a structure that supports rapid, stable convergence?
\vspace{-0.1in}
\subsection{Geometric Incompleteness}
\label{sec:geom_incomplete_sub}
\label{sec:2}

\paragraph{The Bowl-Maze Analogy}
To build intuition for geometric incompleteness,
consider two contrasting views of learning (Fig. \ref{fig:teaser}).
In a \emph{maze view},
the environment is treated as a fixed combinatorial structure.
Learning consists of discovering a path through this structure \cite{minsky_steps_1961}.
The geometry of the space is assumed to be given;
the learner merely searches within it.
Improvements arise from better exploration strategies
or more efficient optimization \cite{russell1995modern}.
In a \emph{bowl view},
the learner reshapes the geometry itself.
Rather than searching for a path,
the system modifies its internal metric so that the goal
becomes the bottom of a locally contractive basin.
Inference then proceeds by descent within this shaped geometry \cite{geman_neural_1992}.
Many successful learning systems implicitly rely on this principle:
they aim to transform a rugged landscape into a smooth basin
through representation learning \cite{bengio_representation_2013}.
These two views coincide when the underlying state space
is topologically simple.
If the semantic manifold $\mathcal{M}$ is contractible,
a sufficiently expressive metric transformation
can in principle induce a globally funnel-shaped landscape \cite{amari1998natural}.
Search can then be replaced by geometry.
However, the equivalence breaks down when $\mathcal{M}$
is semantically complex.

Consider a parity-twisted or Möbius-like state space \cite{hirsch1976differential},
in which a traversal around a nontrivial cycle
reverses orientation.
Any attempt to impose a single global ``bowl''
must introduce a seam or separatrix that partitions
distinct basins of attraction.
No smooth reshaping of the metric can eliminate
the global identification constraint \cite{milnor1963morse}.
The learner may smooth local curvature,
but the topology forces unstable boundaries.
The above observation with the Maze-Bowl analogy reframes the stability-plasticity dilemma \cite{wang_comprehensive_2024}.
When catatrophic interference occurs after contextual change,
the failure may not reflect insufficient capacity
or imperfect optimization \cite{mccloskey_catastrophic_1989}.
Instead, it may signal that the system is attempting
to compress incompatible topological regions
into a single metric basin.
The Bowl-Maze analogy therefore motivates a separation of roles.
If topology imposes unavoidable basin boundaries,
then learning cannot rely solely on metric shaping \cite{kulis2013metric}.
It must also identify which topological regime
the current experience belongs to.
In the following subsection,
we formalize this intuition using Morse theory \cite{milnor1963morse}
and show that saddle-type obstructions
are generically unavoidable on semantically complex manifolds.

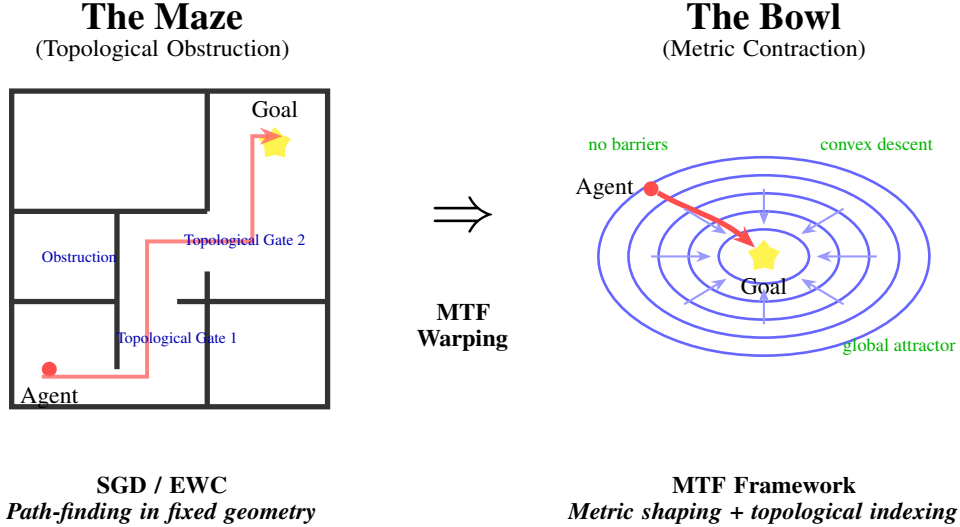
\begin{figure*}[h]
    \centering
   \begin{tikzpicture}[
    node distance=0.5cm,
    agent/.style={circle, fill=red!70, inner sep=2pt},
    goal/.style={star, star points=5, fill=yellow!80, inner sep=3pt},
    wall/.style={line width=2pt, black!80},
]

\begin{scope}[local bounding box=maze]
    \node[font=\Large\bfseries, anchor=north] at (0, 3.5) {The Maze};
    \node[font=\small, text width=5.5cm, align=center, anchor=north] at (0, 3.) {
        (Topological Obstruction)
    };
        
    \draw[wall] (-2, -2) rectangle (2.2, 2.2);
        
    \draw[wall] (-0.6, -1.5) -- (-0.6, 0.6);      
    \draw[wall] (0.6, -2) -- (0.6, -0.2);       
    \draw[wall] (0.6, 0.6) -- (0.6, 2.2);         

    \draw[wall] (-2, -0.6) -- (-0.6, -0.6);     
    \draw[wall] (0.2, -0.6) -- (2.2, -0.6);       
    \draw[wall] (-2, 0.6) -- (0.6, 0.6);        
    
    \node[agent, label={[font=\small]below :Agent}] (agent_maze) at (-1.5, -1.5) {};
    \node[goal, label={[font=\small]above :Goal}] (goal_maze) at (1.5, 1.5) {};

    
    \draw[red!60, line width=1.5pt, -{Stealth[length=3mm]}, opacity=0.8] 
        (-1.6, -1.6) -- 
        (-0.2, -1.6) -- 
        (-0.2, 0.2)  -- 
        (1.2, 0.2)   -- 
        (1.2, 1.6)   -- 
        (1.6, 1.6);

    \node[font=\tiny, text=blue!70!black] at (-1.1, 0.0) {Obstruction};
    \node[font=\tiny, text=blue!70!black] at (0.2, -1.1) {Topological Gate 1};
    \node[font=\tiny, text=blue!70!black] at (1.1, 0.2) {Topological Gate 2};

    \node[font=\small\bfseries, text width=5cm, align=center, anchor=north] at (0, -2.8) {
        SGD / EWC\\
        \textit{Path-finding in fixed geometry}
    };
\end{scope}

\node[font=\Huge, anchor=center] at (4., 0.5) {$\Rightarrow$};
\node[font=\small\bfseries, text width=2cm, align=center, anchor=north] at (4., -0.5) {
    MTF\\Warping
};

\begin{scope}[shift={(8,0)}, local bounding box=bowl]
    \node[font=\Large\bfseries, anchor=north] at (0, 3.5) {The Bowl};
    \node[font=\small, text width=5.5cm, align=center, anchor=north] at (0, 3.) {
        (Metric Contraction)
    };
    
    \foreach \r/\op in {2.2/0.05, 1.8/0.1, 1.4/0.15, 1.0/0.25, 0.6/0.35} {
        \draw[blue!60, fill=blue!\op!white, line width=1pt] (0, 0) ellipse ({\r} and {\r*0.6});
    }
    
    \node[agent, label={[font=\small]left:Agent}] (agent_bowl) at (-1.5, 0.9) {};
    
    \node[goal, label={[font=\small]below:Goal}] (goal_bowl) at (0, 0) {};
    
    \draw[red!70, line width=2pt, -{Stealth[length=3mm]}] 
        (agent_bowl) to[out=-30, in=135] (goal_bowl);
    
    \foreach \angle in {0, 45, 90, 135, 180, 225, 270, 315} {
        \draw[-{Stealth[length=2mm]}, blue!40, line width=0.8pt] 
            ({1.5*cos(\angle)}, {0.9*sin(\angle)}) -- 
            ({0.7*cos(\angle)}, {0.42*sin(\angle)});
    }
    
    \node[font=\scriptsize, text=green!70!black] at (1.8, -1.2) {global attractor};
    \node[font=\scriptsize, text=green!70!black] at (-1.8, 1.5) {no barriers};
    \node[font=\scriptsize, text=green!70!black] at (1.5, 1.5) {convex descent};
    
    \node[font=\small\bfseries, text width=8cm, align=center, anchor=north] at (0, -2.8) {
        MTF Framework\\
        \textit{Metric shaping + topological indexing}
    };
\end{scope}



\end{tikzpicture}
    \caption{\textbf{The Bowl-Maze Analogy: From path-finding to metric contraction.} 
    \emph{Left: The Maze (Topological Obstruction).} Conventional learning systems, such as stochastic gradient descent (SGD) and elastic weight consolidation (EWC), treat intelligence as navigation within a fixed, complex geometry. In this ``Maze''-based regime, topological features, such as holes, walls, or parity flips, manifest as local minima and dead-ends. Learning is a slow, procedural search for a valid path, which is easily invalidated by any change in the environment's topology. 
    \emph{Right: The Bowl (MTF).} Under the MTF framework, intelligence is redefined as the ability to shape the metric structure of the space. Instead of searching for paths, the agent performs \textit{topological indexing} to switch into a coordinate system where the solution emerges as a stable, global attractor. By warping the maze into a ``Bowl'', the agent replaces complex navigation with a simple downhill descent, ensuring that once the topology is identified, the solution is both reachable and consistent.}
    \label{fig:teaser}
\end{figure*}

\noindent\textbf{Topological constraints on smooth inference}
Learning systems often assume that improved optimization
can eliminate interference between tasks.
However, when the semantic state space of an organism
has nontrivial topology, this assumption fails \cite{mcclelland_why_1995} because
topological obstruction is structural rather than algorithmic.
Let $\mathcal{M}$ denote a smooth, compact manifold representing
the latent semantic state space of an environment.
We say that $\mathcal{M}$ is \emph{semantically complex}
if it possesses nonvanishing intermediate homology \cite{hatcher_algebraic_2002},
i.e.,
$\beta_k(\mathcal{M}) = \mathrm{rank}\, H_k(\mathcal{M};\mathbb{R}) > 0
\quad \text{for some } 1 \le k \le d-1$.
Intuitively, such manifolds contain holes, loops,
or separable regions that prevent global contraction
(e.g., a torus rather than a sphere) \cite{milnor1963morse}.
A learning system that relies on a single smooth energy landscape
$E:\mathcal{M}\to\mathbb{R}$
attempts to shape the geometry of $\mathcal{M}$
so that gradient flow
$\dot z = -\nabla_g E(z)$
produces \emph{saddle-free} descent toward a goal.
The following result formalizes why this is impossible
on semantically complex manifolds.
We state it with its hypotheses made explicit, because both matter:
the state must be \emph{intrinsically confined} to $\mathcal{M}$,
and the conclusion concerns the \emph{existence of critical points},
not the failure of convergence.

\begin{theorem}[Metric-Topological Incompleteness]
\label{thm:geom_incomplete}
Let $\mathcal{M}$ be a compact, smooth $d$-dimensional manifold
with nonvanishing intermediate homology,
i.e., $H_k(\mathcal{M}) \neq 0$ for some $1 \le k \le d-1$,
and suppose the system state is constrained to evolve on $\mathcal{M}$,
i.e., the flow $\dot z = -\nabla_g E(z)$ is the gradient flow of
$E$ with respect to a Riemannian metric $g$ on $\mathcal{M}$
(equivalently, $\mathcal{M}$ is an invariant manifold of the dynamics).
Then any Morse function $E:\mathcal{M}\to\mathbb{R}$
must possess at least $\beta_k(\mathcal{M})$ critical points
of index $k$.
In particular, no such $E$ admits a saddle-free descent:
any structurally stable smooth energy landscape
on $\mathcal{M}$ necessarily contains index-$k$ saddle-type equilibria,
whose stable and unstable manifolds furnish a local scaffold--flow splitting.
\end{theorem}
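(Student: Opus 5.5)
The plan is to reduce the statement to the weak Morse inequalities and then read off the dynamical conclusion from the local structure of nondegenerate critical points. Fix a Morse function $E:\mathcal{M}\to\mathbb{R}$ on the compact manifold $\mathcal{M}$, and let $c_j$ denote the number of critical points of index $j$. Because the flow $\dot z=-\nabla_g E(z)$ is assumed to leave $\mathcal{M}$ invariant, the equilibria of the dynamics are exactly the critical points of $E$ on $\mathcal{M}$. The Riemannian metric $g$ affects only the gradient vector field, not the zero set of $dE$, nor the index, which is read from the Hessian at a critical point and is independent of $g$. This observation is worth stating explicitly: it is the precise sense in which ``no reshaping of the metric'' can remove the obstruction. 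It is also why the confinement hypothesis is needed. If the state could leave $\mathcal{M}$, critical points of $E|_{\mathcal{M}}$ need not be equilibria of an ambient flow.

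The core step is the standard handle-decomposition argument from Morse theory \cite{milnor1963morse}. Since $\mathcal{M}$ is compact and $E$ is Morse, there are finitely many critical points, and we may perturb $E$ slightly to make the critical values distinct without changing the $c_j$. Passing a critical value of index $j$ changes the sublevel set $\mathcal{M}^a=\{E\le a\}$ by attaching a $j$-cell up to homotopy. Hence $\mathcal{M}$ is homotopy equivalent to a CW complex with exactly $c_j$ cells in dimension $j$. Cellular homology with real coefficients then gives $\beta_k(\mathcal{M})=\dim H_k(\mathcal{M};\mathbb{R})\le \dim C_k^{\mathrm{cell}}=c_k$, which is the weak Morse inequality. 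The hypothesis $H_k(\mathcal{M})\neq 0$ must be read over $\mathbb{R}$, i.e.\ $\beta_k>0$, as in the paper's definition of semantic complexity. I will flag this explicitly: torsion-only homology would not force index-$k$ points through this argument, and over $\mathbb{R}$ it gives no bound at all. With $\beta_k>0$ and $1\le k\le d-1$, we get $c_k\ge 1$, and every such point is neither a minimum (index $0$) nor a maximum (index $d$). It is therefore a genuine saddle.

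For the ``in particular'' clauses, interpret ``structurally stable smooth energy landscape'' as a Morse function, since Morse functions are open and dense in $C^\infty(\mathcal{M})$ and are stable under small perturbation. At an index-$k$ critical point $p$, the Morse lemma together with the stable manifold theorem for the hyperbolic equilibrium $p$ of $-\nabla_g E$ gives local stable and unstable manifolds of dimensions $d-k$ and $k$. These are transverse at $p$ and span $T_p\mathcal{M}$, and this is the claimed scaffold--flow splitting. Saddle-free descent is then impossible. The stable manifold of $p$ is a nonempty invariant set of dimension $d-k\ge 1$, consisting of points whose trajectories converge to a saddle rather than to a minimum, so trajectories starting there never reach any minimum.

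The main obstacle is not technical depth but hypothesis bookkeeping. The first issue is ensuring that the homology is taken with field coefficients, or that $\beta_k$ is used throughout. The second is making precise that ``structurally stable'' means Morse, which is needed because degenerate functions can merge critical points. The third is making clear that the conclusion concerns the existence of saddles: it does not claim that generic trajectories fail to converge, since the stable set of the saddles has measure zero. I would state these three points explicitly before invoking the Morse inequalities.
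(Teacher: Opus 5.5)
Your proposal is correct and takes the same route as the paper, which invokes the weak Morse inequality $c_k \ge \beta_k$ and notes that critical points of index $1\le k\le d-1$ are saddles; you additionally supply the standard handle/CW-complex derivation of that inequality and the stable/unstable-manifold dimension count, both of which the paper leaves implicit. One small refinement: your torsion caveat is stricter than necessary, because $c_k=0$ forces $C_k^{\mathrm{cell}}=0$ and hence $H_k(\mathcal{M};A)=0$ for every coefficient group $A$, so $H_k(\mathcal{M};\mathbb{Z})\neq 0$ by itself already forces an index-$k$ saddle, and only the quantitative count $\beta_k$ requires field coefficients.
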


\begin{remark}[What the theorem does and does not assert]
\label{rem:scope_thm1}
Two qualifications are essential, and we state them explicitly because the
theorem is easily over-read.
\emph{(i) Forced saddles obstruct monotone descent, not convergence.}
Theorem~\ref{thm:geom_incomplete} guarantees that critical points of
intermediate index exist; it does \emph{not} guarantee that gradient flow
fails to reach a global minimum. The height function on the torus
$T^2$ ($\beta_1=2$) has one minimum, two saddles and one maximum, yet
gradient flow converges to the unique minimum from almost every initial
condition, because the stable manifolds of the saddles have measure zero.
Moreover, strict (nondegenerate) saddles are generically escapable:
perturbed gradient descent leaves them in polynomial time
\cite{ge2015escaping,jin2017escape}. The obstruction is therefore to
\emph{saddle-free, strictly monotone} descent, and to the existence of a
\emph{global} certificate of progress -- not to eventual convergence.
This distinguishes geometric incompleteness from logical undecidability,
where no escape route exists at any cost; the analogy in
Remark~\ref{rem:godel_analogy} is structural, not an equivalence of hardness.
\emph{(ii) The homology must be that of the space the state actually
inhabits.} If the dynamics run in an ambient representation space
$\mathbb{R}^D \supset \mathcal{M}$, that ambient space is contractible, all
its Betti numbers vanish, and no saddle is forced: the flow may simply
descend around the hole. The theorem therefore bites precisely when the
neural state is intrinsically confined to a topologically nontrivial
manifold -- as for head-direction cells on a ring ($S^1$), grid cells on a
torus ($T^2$), or task spaces with a parity/M\"obius identification -- and
not for a generic unconstrained network. Every persistence claim below is
relative to a specified state space, and we name it wherever we use it.
\end{remark}

\begin{remark}[Geometric Analogy to Gödel's Incompleteness]
\label{rem:godel_analogy}
Theorem~\ref{thm:geom_incomplete} is a structural (not an equivalent) counterpart to Gödel's First Incompleteness Theorem. In mathematical logic, Gödel demonstrated that any formal system possessing sufficient structural complexity (i.e., capable of expressing Peano arithmetic) must inevitably contain undecidable propositions, statements that are true but cannot be proven via local syntactic derivations within that system \cite{godel_formally_1962}. 
Similarly, under a geometric framework, Theorem~\ref{thm:geom_incomplete} dictates that any representational manifold $\mathcal{M}$ possessing sufficient topological complexity (i.e., non-trivial intermediate homology $H_k(\mathcal{M}) \neq 0$) must inevitably contain ``undecidable'' optimization states. These are the index-$k$ saddle points \cite{milnor1963morse}: critical equilibria where local, continuous gradient information (the metric derivative) is insufficient to resolve the global optimal path. 
Just as a formal logical system cannot certify its own consistency without appealing to a higher-order meta-language, an agent restricted to local metric descent possesses no \emph{intrinsic} certificate that its current basin is the correct one: at an index-$k$ saddle the gradient is silent about which side of the separatrix leads to the goal, and that one bit must be supplied from outside the flow.
The parallel should not be overstated. As noted in Remark~\ref{rem:scope_thm1}, a strict saddle is generically \emph{escapable} under noise, whereas a G\"odel sentence is not decidable at any cost; the two are conjugate in \emph{form} -- a missing certificate from outside the resolving mode -- but not in \emph{hardness}.
In both domains the resolution requires stepping outside the mode that stalls \cite{nakahara_geometry_2018}: logic appeals to a stronger meta-system, whereas biological intelligence supplies a discrete contextual index that selects the regime before metric descent proceeds within it.
\end{remark}

\paragraph{Sketch of intuition.}
Morse theory establishes a direct relationship between
the topology of a manifold and the critical points of smooth functions defined on it.
Specifically, the Morse inequalities relate
the number of critical points of index $k$
to the $k$-th Betti number $\beta_k = \dim H_k(\mathcal{M})$ \cite{milnor1963morse}.
If $\beta_k > 0$ for some intermediate index $k$,
then any Morse function on $\mathcal{M}$
must contain at least one critical point of that index.
Such critical points correspond to saddles under gradient flow.
Geometrically, intermediate homology implies the presence of
nontrivial cycles (``holes,'' ``handles,'' or ``twists'')
in the manifold.
These topological features cannot be removed by any smooth deformation \cite{edelsbrunner_computational_2010}.
As a result, no smooth energy function on $\mathcal{M}$ can be
\emph{saddle-free}: the descent must pass through critical regions whose
stable and unstable manifolds locally partition the state space into a
contracting scaffold and an expanding flow.
The key insight is that these saddle points are not accidental artifacts
of poor optimization or limited expressivity.
They are required by topology \cite{willard_general_2012}.
We emphasize the precise content of this claim (Remark~\ref{rem:scope_thm1}):
topology forces the \emph{existence} of index-$k$ critical points and hence
of separatrices, but it does not by itself force multiple basins of
attraction, nor does it prevent gradient flow from converging to a global
minimum from almost every initialization.
Nonconvexity is in this sense topologically inevitable \cite{chi2019nonconvex},
but nonconvexity is an obstruction to \emph{monotone, certificate-bearing}
descent rather than a proof that descent fails.

\noindent\textbf{Algorithmic consequences.}
Although the stable manifolds of saddle points are measure-zero
in continuous-time gradient flow,
their dynamical influence is disproportionately large \cite{danilova2022recent}.
In the vicinity of a saddle,
the Hessian of $E$ has both positive and negative eigenvalues,
producing directions of attraction and repulsion.
Under discretization, stochastic gradient noise,
or finite sampling,
trajectories exhibit critical slowing down,
sensitivity to initialization,
and unstable transitions between basins.
These effects create practical bottlenecks:
optimization may stall near flat saddle regions \cite{dauphin2014identifying},
small perturbations may induce abrupt basin switching,
and previously stable attractors may lose dominance
when new regions of the manifold are explored.
Such behavior reflects the unavoidable separatrices imposed by topology beyond the reach of algorimic stability.

\noindent\textbf{Geometric incompleteness.} We therefore refer to this
limitation as \emph{geometric incompleteness} \cite{li2026beyond}:
when the state is confined to a topologically nontrivial semantic manifold,
no fixed smooth representational geometry supports saddle-free, globally
certified descent. Topology induces unavoidable saddle structure, and
saddles induce dynamical bottlenecks -- critical slowing, sensitivity to
initialization, and separatrix-crossing under perturbation.
This is the \emph{intrinsic-topology} route to the necessity of indexing,
and it applies exactly where the neural state space genuinely carries
nontrivial homology (ring, torus, parity-twisted task spaces).
It is not, however, the only route, and it is not the one that explains
catastrophic interference in the general case. Interference arises when two
contexts demand \emph{mutually inconsistent maps on overlapping inputs} --
a conflict obstruction that requires no nontrivial homology whatever. We
develop this second, more general argument in
Section~\ref{sec:conflict}, and treat the homological argument of the
present section as the special case in which the conflict is enforced by the
intrinsic topology of the state space itself.
Both routes converge on the same conclusion:
intelligence must introduce an additional mechanism,
a discrete indexing or factorization into complementary learning systems \cite{mcclelland_why_1995},
that separates contextual identity from metric shaping.

\noindent\textbf{Biological interpretation.}
The hippocampus-neocortex division can be interpreted
as a biological response to geometric incompleteness \cite{mcclelland_why_1995}.
Rather than attempting to learn a single global landscape,
the brain separates: \emph{topological identification},
which determines the global context or orientation of experience,
from \emph{metric condensation},
which shapes local geometry within that context.
Under this view,
contextual remapping \cite{kubie2020hippocampal} does not represent a failure of stability,
but an adaptive response to unavoidable saddle structure.
By switching between discrete topological regimes,
the brain avoids attempting to force incompatible
regions of $\mathcal{M}$
into a single global basin.
In the following section,
we formalize this separation as
Metric-Topology Factorization (MTF)
and show how topological indexing \cite{teyler1986hippocampal}
enables context-specific geometric control.

\subsection{The Conflict Obstruction: Width and the Phase-Transition Floor}
\label{sec:conflict}

The argument of Section~\ref{sec:geom_incomplete_sub} is exact but narrow: it
requires the state to be confined to a manifold with nonvanishing intermediate
homology. Catastrophic interference, however, occurs routinely in
representation spaces that are contractible. We therefore give a second,
logically independent argument for the necessity of indexing -- one that
requires no topology at all, that directly explains interference, and that
yields a sharp quantitative prediction. We call it the \emph{conflict
obstruction}, and we regard it as the primary route; the homological argument
is then the special case in which the conflict is enforced by the intrinsic
topology of the state space.

\paragraph{Setup.} Let a task environment present $w$ contexts
$c \in \{1,\dots,w\}$, each equally weighted, each specifying a labeling rule
$r_c: \mathcal{X} \to \mathcal{Y}$ on a shared input space $\mathcal{X}$. We
say the contexts are \emph{maximally conflicting} if any two distinct rules
disagree everywhere: $r_c(x) \neq r_{c'}(x)$ for all $x \in \mathcal{X}$ and
all $c \neq c'$. We call $w$ the \emph{width} of the environment. A factorized
learner allocates $K$ cells; routing assigns each context to a cell, and each
cell applies a single context-independent map to the input.

\begin{definition}[Width]
\label{def:width}
The width $w$ of an environment is the number of mutually inconsistent
labeling regimes it contains, i.e., the minimum number of context-independent
maps required to realize all of its rules exactly.
\end{definition}

\begin{proposition}[Phase-transition error floor]
\label{prop:floor}
Under the above setup with hard (discrete) routing, the test error of any
$K$-cell learner satisfies
\[
\varepsilon(w,K) \;\ge\; \eta(w,K) \;:=\; \max\!\left(0,\ \frac{w-K}{w}\right),
\]
and the bound is attained by any balanced assignment of contexts to cells.
In particular the floor vanishes exactly when $K \ge w$, and the transition at
$K = w$ is sharp.
\end{proposition}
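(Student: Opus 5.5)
The plan is a pigeonhole argument over cells, followed by a counting argument inside each cell. Contexts are equally weighted, so test error is the average over $c$ of the per-context error $\varepsilon_c$, where $\varepsilon_c$ is the probability over inputs $x$ (from a fixed input distribution $\mu$ on $\mathcal{X}$) that the cell receiving $c$ mislabels $x$. Hard routing is a function $\rho:\{1,\dots,w\}\to\{1,\dots,K\}$. Write $n_j=|\rho^{-1}(j)|$ for the number of contexts sent to cell $j$; then $\sum_j n_j=w$. Each cell applies one map $f_j:\mathcal{X}\to\mathcal{Y}$ to every context it serves.

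\textbf{Step 1 (per-cell floor).} Fix a cell $j$ with $n_j\ge 1$ and fix an input $x$. Maximal conflict means the labels $r_c(x)$ for $c\in\rho^{-1}(j)$ are pairwise distinct. Hence $f_j(x)$ equals at most one of them, so at least $n_j-1$ of the served contexts err at $x$. Integrating over $x\sim\mu$ gives
\[
\sum_{c\in\rho^{-1}(j)}\varepsilon_c \;\ge\; n_j-1 .
\]
Empty cells contribute $0$ errors and $0$ contexts.

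\textbf{Step 2 (sum over cells).} Summing Step 1 over the nonempty cells $J^+$ gives total error at least $w-|J^+|\ge w-K$. Total error is also nonnegative. Dividing by $w$ gives $\varepsilon\ge\max(0,(w-K)/w)=\eta(w,K)$. The bound is independent of $\rho$, of the $f_j$, and of $\mu$, which is what makes it a floor for every $K$-cell learner.

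\textbf{Step 3 (attainment).} Take any balanced assignment with no empty cell when $K\le w$, and an injective assignment when $K>w$. In each cell, let $f_j$ copy the rule of one designated context in that cell. That context then has zero error and each of the other $n_j-1$ contexts has error at most $1$. The total error is therefore at most $\sum_j (n_j-1)=w-\min(K,w)$, which matches the floor.

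\textbf{Step 4 (sharpness).} The floor is $0$ exactly when $K\ge w$. At $K=w-1$ it jumps to $1/w>0$. So there is a discontinuous transition at $K=w$ rather than a smooth decay.

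The main obstacle is interpretive rather than technical. "Balanced" has to be read as "uses all $\min(K,w)$ cells", since attainment depends only on leaving no cell empty and not on equal sizes. Attainment also needs each off-designated context's error to be exactly $1$. This holds under maximal conflict, because copying $r_{c^\ast}$ disagrees with every other $r_c$ at every $x$. I would note this explicitly so that equality, and not merely an upper bound, is justified. One further point needs checking: the per-cell map must be context-independent, meaning the cell cannot see $c$ beyond the routing. Otherwise a cell could emulate several rules and the bound would fail. I would state this assumption explicitly from the setup.
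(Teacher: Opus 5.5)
Your proof is correct and follows essentially the paper's argument: partition the contexts by cell, note that one context-independent map can match at most one of a group's pairwise-disagreeing rules, and sum the resulting $g_j-1$ losses. Your version is tighter than the paper's phrasing in two places: you count disagreements pointwise in $x$ and then integrate, whereas the paper says the map ``errs on the entire mass'' of the other $g_j-1$ rules; and you treat empty cells separately, whereas the paper takes $\sum_j (g_j-1)/w=(w-K)/w$ to be partition-independent, which implicitly assumes no cell is empty. One small correction to Step 4: $\eta(w,\cdot)$ drops by the same $1/w$ at every unit step up to $K=w$ and is flat afterwards, so the sharp transition is a knee (the slope changes from $-1/w$ to $0$), not a jump.
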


\begin{proof}
Any routing partitions the $w$ contexts into $K$ groups of sizes
$g_1,\dots,g_K$ with $\sum_j g_j = w$. A cell serving a group applies one map;
since the rules in the group pairwise disagree on every input, that map can
agree with at most one of them, and errs on the entire mass of the other
$g_j - 1$. Group $j$ carries mass $g_j/w$, contributing error
$(g_j-1)/w$. Summing, $\sum_j (g_j - 1)/w = (w - K)/w$ when $K \le w$,
independent of the partition. For $K \ge w$ each context receives its own
cell and the floor is zero.
\end{proof}

\paragraph{Interpretation.} Proposition~\ref{prop:floor} is the general form of
geometric incompleteness, and it is stronger than the homological argument in
three respects. It requires no assumption about the topology of the
representation space -- the floor appears in $\mathbb{R}^D$. It is
\emph{quantitative}: it predicts not merely that a single metric fails, but the
exact residual error it must incur. And it identifies the resource that must be
allocated -- $K$, the number of indexable regimes -- and the observable that
governs it -- $w$, the width. Catastrophic interference is then not a
consequence of holes in a manifold, but of \emph{under-allocated indexing
capacity}: $K < w$. Under-allocation, not distributed representation, is the
failure mode.

\paragraph{Empirical confirmation and a falsifiable prediction.} The floor
$\eta(w,K)$ is directly testable, and we have verified it in controlled
numerical experiments on planted-width data (contexts identified by a separable
marker; a cyclic-shift relabeling that renders any two contexts maximally
conflicting). Sweeping $K$ at fixed $w \in \{3,5,8\}$, the test error of a
hard-routed $K$-cell learner exhibits a high plateau for $K < w$, a sharp knee
located at $K = w$ for every $w$ tested, and a plateau at the oracle error for
$K \ge w$; a parameter-matched dense network shows no such $w$-locked knee,
confirming that the effect is a property of indexing capacity rather than of
raw expressivity. The measured floor respects $\eta(w,K)$ as a lower bound and
is tight near $K = w$. MTF thus makes a risky prediction that a capacity-based
account does not: \emph{the knee must track $w$, not the parameter count.}

\paragraph{The role of relaxation.} One further empirical observation
sharpens the theory. The floor of Proposition~\ref{prop:floor} is a property of
\emph{hard} routing. If routing is relaxed -- each context assigned a
distribution over cells, so that the effective map is a convex blend of the $K$
cell maps -- the learner can partially escape $\eta(w,K)$ whenever the $w$
rules are not linearly independent, because $K$ blended maps can span more than
$K$ distinct effective maps. Continuous relaxation is therefore not merely a
computational convenience for an intractable discrete assignment; it is a
genuine, if partial, escape from the discrete floor, available exactly to the
extent that the contexts share structure. This is the precise sense in which
the continuous and discrete halves of MTF are complementary rather than
redundant: indexing is indispensable when the regimes are genuinely
incompatible, and relaxation recovers capacity when they are not.

\section{Metric-Topology Factorization via Hippocampal-Neocortical Dialogue}
\label{sec:3}

\subsection{Metric-Topology Factorization}

Section~\ref{sec:2} showed that semantically complex state spaces
cannot support globally contractive inference
under a single smooth energy landscape.
If topology induces unavoidable saddle structure,
then stability cannot be achieved by metric shaping alone.
An additional mechanism is required.
We propose that biological intelligence resolves
this geometric limitation through
\emph{Metric-Topology Factorization (MTF)}:
a separation between continuous metric dynamics within a context,
and discrete indexing across contexts.

\noindent\textbf{Metric Collapse (Within-Context Dynamics).}
Given a fixed context,
neural activity evolves under gradient-like dynamics
$\dot z = -\nabla_g E(z)$,
driving states toward locally stable attractors \cite{amari1998natural}.
We refer to this contraction toward a basin
as \emph{metric collapse}.
It corresponds to active inference during wakefulness \cite{friston_active_2017},
where prediction errors are minimized
and neural trajectories settle into stable representations.
Importantly, metric collapse operates
within an existing geometric regime \cite{friston_action_2010}.
It reshapes activity distributions
but does not eliminate the global separatrices
imposed by topology.

\noindent\textbf{Topological Indexing (Across-Context Separation).}
To prevent interference between incompatible regimes,
the system must first determine
which semantic manifold is currently active.

\begin{definition}[Topological Index]
A topological index is a discrete signature
$\Sigma = \mathcal{S}(x)$
that partitions the latent state space
into context-dependent regions
$\{\mathcal{M}_\Sigma\}$.
\end{definition}

The index does not encode detailed geometry.
Rather, it identifies global orientation or regime
(e.g., environment identity, parity state, or social context).
In biological systems,
rapid hippocampal remapping provides such indexing \cite{teyler1986hippocampal}:
sparse, orthogonal firing patterns
signal which contextual manifold
the neocortex should operate within \cite{buzsaki_memory_2013}.

\noindent\textbf{Metric Condensation (Schema Formation).}
Repeated collapse within a context
leads to consolidation \cite{squire_memory_2015}.
Over time, detailed metric structure
within a basin is compressed into a stable schema.
We refer to this structural compression
as \emph{metric condensation}.
Unlike transient collapse,
condensation reflects a longer-term transformation \cite{tse_schemas_2007}:
representations become increasingly invariant
to within-context variability.
The system no longer stores fine-grained trajectories,
but retains only their stable geometric core.
Neocortical consolidation and replay
provide a plausible biological substrate
for this gradual compression \cite{stickgold_sleep-dependent_2005}.

\noindent\textbf{Factorized Inference.}
Inference under MTF proceeds in two stages:
1) \textit{Index:} determine the active context $\Sigma(x)$;
2) \textit{Collapse:} perform contractive inference within the selected regime.
Formally,
$\mathcal{F}(x)
=\mathcal{G}_{\Sigma(x)}(\Phi_\theta(x))$,
where $\Phi_\theta$ denotes shared feature extraction
and $\mathcal{G}_{\Sigma}$ denotes context-specific
collapse dynamics.
By separating context identity from metric shaping,
the system avoids forcing incompatible regimes
into a single global geometry \cite{mcclelland_why_1995}.
Stability is maintained through discrete indexing,
while plasticity is preserved through local adaptation \cite{abraham2005memory}.

\paragraph{Hierarchical inference as metric amortization.}

Under the MTF framework,
the process of hierarchical inference \cite{lee_hierarchical_2003} can be interpreted
as sequential metric amortization.
Each cortical stage reduces a subset of variability,
thereby decreasing effective search complexity.
Instead of solving a single high-dimensional
non-convex optimization,
the system performs a series of local transformations,
each simplifying the manifold geometry \cite{RiesenhuberPoggio1999Hierarchical}.
In the continuous limit,
this hierarchical composition can be viewed
as approximating a smooth flow
that transports sensory states
toward more linearly organized representations \cite{dicarlo_how_2012}.
Learning adjusts the parameters of each $f_\ell$
so that trajectories through the hierarchy
follow progressively straighter paths,
facilitating simple downstream readout.

\noindent\textbf{Biological Mapping}
The hippocampal-neocortical architecture \cite{buzsaki_hippocampo-neocortical_1996}
naturally implements this factorization (Fig. \ref{fig:horizontal_mtf}).
1) \textit{Hippocampus (Indexing):}
    Rapid remapping generates sparse signatures
    that partition experience into discrete contexts \cite{teyler1986hippocampal}.
    These signatures function as topological keys.
2) \textit{Neocortex (Condensation):}
    Within each indexed context,
    cortical networks gradually refine a contractive geometry
    through synaptic plasticity and replay-driven consolidation \cite{tse_schemas_2007}.
3) \textit{Replay (Amortization):}
    Recurrent reactivation refines
    the metric associated with each signature,
    implementing memory-amortized inference across episodes \cite{gershman_amortized_2014}.
Under this framework,
catastrophic interference \cite{mccloskey_catastrophic_1989}
arises when distinct relational regimes are compressed
into a single representational geometry.
When contextual separation is absent,
learning forces incompatible structures
into a shared basin of attraction,
producing interference rather than integration.

Conversely, remapping and context-dependent coding
should not be viewed as failures of stability,
but as principled responses to structural constraints \cite{yamins_using_2016}:
when the underlying relational manifold changes,
the representational geometry must change with it.
If stability requires separating relational structure
before shaping local geometry,
a natural question follows:
how does the brain algorithmically transform
high-dimensional sensory input
into structured, low-interference representations \cite{simoncelli2001natural}?
In the following section,
we examine how this transformation is implemented
through hierarchical processing in the ventral stream,
and how this sequential decomposition
converts a globally tangled landscape
into locally contractive representations
that support efficient inference.

\subsection{Hierarchical Untangling in the Ventral Stream}
\label{sec:ventral_hierarchy}

\paragraph{The curse of dimensionality.}

Directly mapping pixel space to object identity
requires solving a global inference problem
over a highly non-convex manifold.
The effective volume of such spaces grows exponentially
with intrinsic dimension \cite{bellman_dynamic_1966},
rendering exhaustive comparison or global search infeasible.
In high-dimensional sensory space,
small changes in pose, illumination, or viewpoint
produce large Euclidean displacements,
causing object manifolds to become densely intertwined.
Global linear separation in this space is therefore impossible.
Biological vision must therefore avoid
solving the full problem in a single step.
Instead, it decomposes the mapping from pixels to identity
into a sequence of structured intermediate transformations
\cite{RiesenhuberPoggio1999Hierarchical,dicarlo_how_2012}.
This decomposition mitigates exponential scaling
by restricting each stage to a tractable subset
of the overall variability.

\paragraph{Why hierarchy is necessary.}

A single global transformation
from pixels to object identity
would require resolving all symmetry
and nuisance variability simultaneously.
Such a mapping would be computationally unstable
and metabolically costly \cite{simon2012architecture}.
Hierarchy distributes this burden across layers,
each addressing a restricted class of transformations.
Deep abstraction therefore emerges
as the cumulative effect
of sequential invariance-building.
In summary,
the ventral stream does not directly search
a high-dimensional identity space \cite{hickok2004dorsal}.
It constructs that space through layered geometric refinement.
Object recognition becomes tractable
because the manifold is gradually untangled,
not because the system performs
exhaustive global computation.

\paragraph{Hierarchy as sequential factorization.}

A central algorithmic question is how the ventral visual stream
implements this decomposition.
Empirical and theoretical work suggests that
object recognition proceeds through a hierarchy
of progressively more abstract representations
\cite{dicarlo_untangling_2007,YaminsDiCarlo2016Using}.
Formally, let $x \in \mathcal{X}_0$ denote raw sensory input.
The ventral stream implements a composition of maps
$F(x) = f_L \circ f_{L-1} \circ \cdots \circ f_1(x)$,
where each transformation $f_\ell : \mathcal{X}_{\ell-1} \to \mathcal{X}_\ell$
operates on a progressively untangled feature space.
Rather than computing a single global transformation,
each stage removes a limited class of nuisance variability
(e.g., local translation, small rotations, scale).
In doing so, the hierarchy performs a sequential factorization
of the global inference problem \cite{lee_hierarchical_2003}:
local invariances are established early,
their solutions reused and recombined at higher levels.

The factorization strategy parallels dynamic programming \cite{bellman_dynamic_1966}.
A high-dimensional, combinatorial mapping
is decomposed into simpler subproblems
whose intermediate solutions constrain
the search space available to subsequent layers.
By progressively reducing variability,
the hierarchy converts an exponentially large
search space into a sequence of manageable transformations \cite{simon2012architecture}.
In this way, hierarchical processing
constitutes nature’s solution to the curse of dimensionality:
global untangling emerges from the accumulation
of local geometric simplifications \cite{dicarlo_how_2012}.

\emph{1) Local feature extraction (V1).}
Early visual cortex computes localized feature maps
(e.g., oriented edges, spatial frequency components).
These operations reduce raw pixel topology
into locally structured metric features.
Importantly, this stage does not solve object identity;
it amortizes small, translation-sensitive subproblems.

\emph{2) Intermediate invariance (V2/V4).}
Subsequent layers integrate local features
and begin discarding precise spatial coordinates
through pooling and nonlinear combination.
By reducing sensitivity to small translations,
rotations, and scale changes,
these layers effectively quotient out
subsets of symmetry transformations.
The representation becomes increasingly invariant
to nuisance variability,
and object manifolds begin to separate.

\emph{3) High-level abstraction (IT cortex).}
At higher stages,
representations become tolerant
to large viewpoint and illumination changes.
Empirically, inferotemporal neurons
respond selectively to object identity
across substantial transformations.
In geometric terms,
the hierarchy progressively reshapes
the data manifold such that
object-specific regions occupy
compact, approximately separable domains
in representation space.

\paragraph{Generalization beyond vision.}

Although the ventral visual stream provides a concrete illustration,
the hierarchical decomposition described above is not specific to vision.
It reflects a more general strategy for coping with high-dimensional,
structured variability.
Any cognitive domain in which observations arise from
combinatorial transformations of latent structure
faces the same curse of dimensionality.
Whether the task involves motor coordination,
language comprehension, social reasoning,
or abstract planning \cite{Felleman1991Distributed},
the underlying state space is typically non-convex
and organized by multiple interacting sources of variation.
A direct, monolithic mapping from raw input to decision
would scale exponentially with dimensionality
and quickly become computationally intractable.

Hierarchical processing provides a domain-general solution.
By decomposing a global inference problem into
a sequence of locally solvable subproblems,
the system incrementally establishes invariances
and reuses intermediate structure \cite{Douglas2004NeuronalCircuits}.
Each stage reduces effective variability
along a restricted subset of transformations,
thereby constraining the search space available to later stages.
Because these intermediate representations are reusable,
they support transfer across tasks and modalities.
Empirical evidence suggests that hierarchical organization
is a common architectural motif across the cortex,
including auditory \cite{Kaas2000Subdivisions}, motor \cite{merel2019hierarchical}, and prefrontal \cite{Badre2007RostroCaudal} systems.
The same principles that untangle object manifolds in vision
appear to govern abstraction in language,
hierarchical action planning,
and relational reasoning.
In each case, the complex global structure
is rendered tractable through progressive factorization.
Therefore, hierarchical dynamic decomposition
is not merely a property of the visual cortex,
but a general computational strategy
for achieving scalable generalization \cite{sun2024easy}.
Intelligence emerges not from solving larger problems directly,
but from restructuring them into sequences
of manageable transformations.

\subsection{The Illusion of ``What'': Object Identity as Navigation in Abstract Space}
\label{sec:what_as_where}

Classical neurobiology distinguishes two major visual pathways \cite{ungerleider_whatand_1994}:
the dorsal ``where'' stream,
associated with spatial localization and action,
and the ventral ``what'' stream,
associated with object identity and recognition.
This division is often interpreted
as evidence for distinct computational systems:
one specialized for navigation \cite{freud2016happening},
the other for categorization \cite{dicarlo_untangling_2007}.
Under the MTF framework,
we propose a different interpretation.
Rather than inventing a fundamentally new
categorization engine,
evolution appears to have extended
ancestral spatial computation
into progressively higher-dimensional domains \cite{epstein2017cognitive}.

\paragraph{Navigation as the ancestral computation.}

Early nervous systems evolved primarily
to solve spatial problems \cite{tolman_cognitive_1948}:
locating food,
avoiding predators,
and coordinating movement.
These tasks require constructing relational maps
and computing trajectories within them \cite{okeefe_hippocampus_1978}.
Hippocampal and parietal circuits
support path integration,
graph-like representations,
and reachability computations \cite{mcnaughton_dead_1991,mcnaughton_path_2006}.
Such systems are optimized
for operating on continuous manifolds
with obstacles and boundaries.
They solve non-convex navigation problems
by building structured internal geometry \cite{muller_hippocampus_1996}.

\paragraph{Object identity as high-dimensional location.}

When mammals evolved more complex sensory systems \cite{mountcastle1998perceptual},
particularly high-resolution vision,
a new challenge emerged:
categorizing objects across changes in viewpoint,
illumination, and scale.
Instead of constructing a wholly separate algorithm,
the brain appears to reuse
existing geometric machinery \cite{gould1982exaptation}.
Under MTF,
object recognition can be understood
as identifying a stable location
within a learned feature manifold \cite{RiesenhuberPoggio1999Hierarchical}.
A category (e.g., ``predator'')
corresponds not to a symbolic label alone,
but to a region of state space
that attracts nearby perceptual trajectories.
The ventral stream progressively reshapes
visual input such that variations in pose or lighting
collapse into compact regions of representation space \cite{dicarlo_how_2012}.
Once this transformation is complete,
categorization reduces to a form of navigation:
the system converges to the nearest stable basin
within the abstract manifold.

\paragraph{From physical space to semantic space.}

In this view,
``what'' is not computationally distinct from ``where.''
It is a special case in which
the spatial manifold has been internalized
and expanded into a high-dimensional feature space.
The same geometric principles apply \cite{constantinescu_organizing_2016}:
states move along trajectories,
basins define stable outcomes,
and distance encodes similarity.
The dorsal stream operates on physical coordinates;
the ventral stream operates on abstract coordinates.
Both implement variants of relational navigation,
differing primarily in the dimensionality
and modality of the underlying manifold.

\paragraph{Implications for stability and abstraction.}

Interpreting object identity as navigation
clarifies why hierarchical geometric refinement
is essential.
Before identity can be stably recognized,
the manifold must be reshaped
to remove nuisance variability \cite{dicarlo_how_2012}.
Only after sufficient metric condensation
does the category behave like a convex basin,
supporting rapid convergence.
The apparent separation between ``where''
and ``what'' may reflect
an evolutionary layering of representational spaces \cite{goodale_separate_1992},
rather than the creation of fundamentally distinct algorithms.
Object recognition emerges as
navigation within an internalized,
context-indexed geometry.

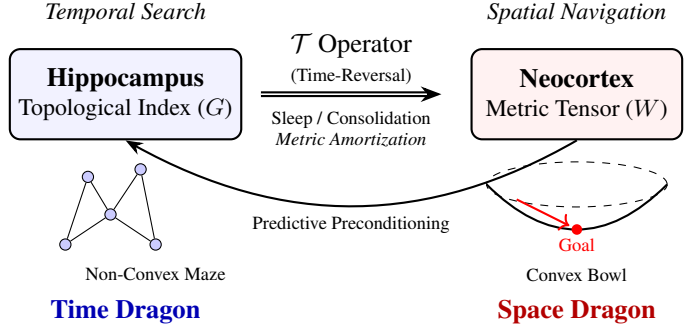
\begin{figure}[h]
    \centering
  \begin{tikzpicture}[
    node distance=1.5cm,
    box/.style={draw, thick, rounded corners, minimum width=2.8cm, minimum height=1.2cm, align=center},
    graph_node/.style={circle, draw, fill=blue!20, inner sep=1.5pt},
    arrow/.style={-{Stealth[bend]}, thick}
]

\node[box, fill=blue!5] (hip) at (0,0) {\textbf{Hippocampus} \\ \small Topological Index ($G$)};
\node[above=0.2cm of hip, font=\itshape\small] {Temporal Search};

\begin{scope}[shift={(-0.8,-2.)}]
    \node[graph_node] (n1) at (0,0) {};
    \node[graph_node] (n2) at (0.6,0.4) {};
    \node[graph_node] (n3) at (0.3,0.9) {};
    \node[graph_node] (n4) at (1.2,0) {};
    \node[graph_node] (n5) at (1.0,1.0) {};
    \draw (n1) -- (n2) -- (n3) -- (n1);
    \draw (n2) -- (n4) -- (n5) -- (n2);
    \node[below=0.1cm of n4, font=\scriptsize] {Non-Convex Maze};
\end{scope}

\coordinate (A) at (1.8,0);
\coordinate (B) at (4.2,0);

\draw[arrow, double] (A) -- (B) 
    node[above, midway, align=center] {$\mathcal{T}$ Operator \\ \scriptsize (Time-Reversal)};
\node[below=0.1cm of $(A)!0.5!(B)$, align=center, font=\scriptsize] {Sleep / Consolidation \\ \textit{Metric Amortization}};

\node[box, fill=red!5] (neo) at (6,0) {\textbf{Neocortex} \\ \small Metric Tensor ($W$)};
\node[above=0.2cm of neo, font=\itshape\small] {Spatial Navigation};

\begin{scope}[shift={(6,-1.8)}]
    \draw[thick] (-1.2,0.6) .. controls (-0.4,-0.2) and (0.4,-0.2) .. (1.2,0.6);
    \draw[dashed] (0,0.6) ellipse (1.2cm and 0.3cm);
    \fill[red] (0,0) circle (2pt) node[below] {\scriptsize Goal};
    \draw[->, red, thick] (-0.8,0.4) -- (-0.1,0.05);
    \node at (0,-0.6) {\scriptsize Convex Bowl};
\end{scope}

\draw[arrow, bend right=-30] (neo.south) to node[below, midway, font=\scriptsize] {Predictive Preconditioning} (hip.south);

\node[below=2cm of hip, blue!70!black, font=\bfseries] {Time Dragon};
\node[below=2cm of neo, red!70!black, font=\bfseries] {Space Dragon};

\end{tikzpicture}
    \caption{\textbf{Horizontal Metric-Topological Factorization (MTF) Flow.} This diagram illustrates the transformation of intelligence from temporal search to spatial inference. \textbf{Left:} The \textbf{Hippocampus} (Time Dragon) builds a non-convex topological maze ($G$) through sequential experience. \textbf{Center:} The \textbf{Time-Reversal Operator ($\mathcal{T}$)} acts as a bridge during consolidation, performing the metric amortization that converts spent experience into pre-positioned structure (\emph{advice}). \textbf{Right:} The \textbf{Neocortex} (Space Dragon) generates a convex metric tensor ($W$), enabling a ``Slingshot'' effect where goals are reached via simulation-free flow matching rather than step-by-step search. The feedback loop indicates how cortical predictions regularize hippocampal acquisition.}
    \label{fig:horizontal_mtf}
\end{figure}

\section{Hippocampal-Neocortical Dialogue and the Paradox of Consciousness}

\subsection{Hippocampal-Neocortical Dialogue}
\label{sec:hippocampal_dialogue}

The separation between topological indexing and metric condensation
requires a biological implementation.
We propose that the hippocampal–neocortical system
constitutes such an implementation (Fig. \ref{fig:horizontal_mtf}),
with each component specializing in complementary operations
under the MTF framework.

\paragraph{Hippocampus as contextual index.}

The hippocampus exhibits rapid remapping across environments,
tasks, and episodes.
Place cells and grid cells \cite{moser_place_2008} generate sparse,
orthogonal population codes that change abruptly
when contextual structure changes.
These codes function as discrete contextual signatures:
they identify which relational manifold
is currently active \cite{teyler1986hippocampal}.
As contextual indexes, hippocampal representations are sparse and episodic.
They do not encode detailed cortical feature geometry.
Instead, they provide a unique pointer into stored cortical structure.
Under MTF, the hippocampus performs topological identification:
it partitions experience into context-dependent regimes
without encoding the full metric structure of each regime.

\paragraph{Neocortex as metric store.}

In contrast, the neocortex supports gradual,
experience-dependent plasticity.
Through repeated exposure and offline replay \cite{buzsaki_hippocampo-neocortical_1996},
cortical networks learn statistical regularities
and invariant structure across episodes.
This learning can be interpreted as \emph{metric condensation}:
within each indexed context,
the cortex reshapes representational geometry
to reduce variance along nuisance dimensions
and concentrate task-relevant structure.
Predictive coding models \cite{bastos_canonical_2012} suggest that cortical microcircuits
implement a broadly conserved computational motif,
minimizing prediction error across modalities.
Such uniform circuitry enables a shared representational space
in which visual, auditory, motor, and abstract variables
can be embedded within a common geometric framework,
which aligns with Mountcastle’s proposal
of a canonical cortical computation \cite{mountcastle_columnar_1997}.

\paragraph{The dialogue as memory-amortized inference.}

The interaction between hippocampus and cortex
constitutes an iterative inference loop \cite{gershman_amortized_2014}.
During wakefulness,
hippocampal indexing gates cortical processing,
selecting the contextual manifold within which
metric collapse unfolds.
During sleep and quiet wakefulness,
hippocampal replay drives reactivation of cortical patterns,
allowing cortical weights to consolidate
invariant structure associated with indexed contexts \cite{wilson_reactivation_1994}.
If cortical prediction errors accumulate
in a manner inconsistent with the current hippocampal index,
remapping can occur,
triggering plasticity and formation of a new contextual signature.
Therefore, the dialogue enables both stability
(preserving existing context-metric pairings)
and plasticity (creating new pairings when required).

\paragraph{Relation to neural collapse and catastrophic interference.}

Recent results in deep learning show that,
under certain training regimes,
class representations in deep networks
tend toward highly symmetric configurations
approximating equiangular tight frames (ETF) \cite{papyan2020prevalence}.
While biological systems are not known
to literally implement ETF geometry,
this result illustrates a general principle \cite{zhu2021geometric}:
when optimization drives representations
toward maximal class separation
under shared constraints,
highly regular geometric structures emerge.
Under MTF,
cortical consolidation progressively reduces
within-class variability
while increasing between-class separation.
The limiting configuration,
in which class manifolds occupy compact,
well-separated regions of state space \cite{medin1981linear},
facilitates linear readout and efficient decision-making.
The ETF therefore serves as a mathematical idealization
of the endpoint of metric condensation,
rather than a literal neural structure.
Artificial networks often encode both context identity
and task geometry directly in shared weights,
leading to interference \cite{mccloskey_catastrophic_1989}.
In contrast,
the hippocampal-neocortical system
maintains a relational mapping
between contextual indices and cortical geometry.
Memory is not a static weight configuration alone,
but a coupling between index and metric structure.
This factorization allows cortical representations
to remain reusable across tasks \cite{behrens_what_2018},
while hippocampal indexing prevents incompatible regimes
from collapsing into a single geometry.


\subsection{Cognitive Functions as Cycle-Consistent Inference: The Context-Content Uncertainty Principle}
\label{sec:ccup_cognitive_loop}

Under the MTF framework, the bidirectional translation between metric geometry (``What'') and spatial topology (``Where'') is not merely an architectural heuristic, but a fundamental thermodynamic requirement for intelligence. This continuous loop is formally governed by the \emph{Context-Content Uncertainty Principle} (CCUP) \cite{li_content-context_2025}, which dictates that inference under uncertainty is driven by an inherent entropy asymmetry: cognitive systems must continuously align high-entropy, dynamic contextual flows with low-entropy, stable content scaffolds. 
By mapping the MTF transformations to the CCUP, we can formally derive all high-level cognitive functions as manifestations of cycle-consistent entropy minimization.

\paragraph{Perception and Consolidation (Where $\to$ What).}
The bottom-up trajectory of cognition, encompassing perception, categorization, and memory consolidation, is the process of contextual disambiguation. In CCUP terminology, the raw, high-dimensional topological manifold of the environment constitutes the \emph{Context} ($\Psi$). This is a high-entropy stream of situational variance, sensory fluctuations, and prediction errors (topologically corresponding to odd-dimensional homology, $H_{odd}$, or dynamic flow). 
To make sense of this chaos, the system must execute the Urysohn Collapse ($\mathcal{T}$ operator) \cite{urysohn_zum_1925}. It funnels the ambiguous ``Where'' through a structural bottleneck to extract invariant, low-entropy \emph{Content} ($\Phi$) (even-dimensional homology, $H_{even}$, or stable scaffolds). Perception is directional entropy minimization \cite{friston_active_2017}: stripping away the high-entropy spatial nuisance variables to isolate a stable, geometric ``What'' (e.g., recognizing an object regardless of lighting or angle).

\paragraph{Action and Planning (What $\to$ Where).}
Conversely, physical action, sequential planning, and language generation follow the inverse pathway. Because Content ($\Phi$) is a highly compressed metric point (a ``What''), it lacks the spatial and temporal specificity required to interact with the physical world. An agent cannot physically execute a latent vector; it can only execute a sequential trajectory \cite{yoo2018economic}.
Under CCUP, this decompression is formalized as \emph{inverted inference} or \emph{structure-before-specificity}. The brain seeds a low-entropy goal or semantic intention (the metric ``What'') and unpacks it into a high-entropy, sequential topological trajectory (the ``Where''). For example, the motor cortex takes an invariant intention and expands it into the dynamic contextual flow of muscle contractions \cite{russo2018motor}. Similarly, language generation is the algorithmic serialization of a geometric ``What'' into a 1D topological ``Where'' (a sequence of tokens over time).

\paragraph{Cycle-Consistent Bootstrapping.}
Ultimately, abstract reasoning, episodic memory, and generalized intelligence emerge from the closed-loop, cycle-consistent bootstrapping of these two pathways \cite{sohn_network_2021}. Logical deduction occurs when the system lacks a direct metric mapping; it must wake up the ``Time Dragon'' to unroll a ``What'' into an exploratory ``Where'' (working memory/search), find a topological path to the conclusion, and then re-collapse it into a new ``What'' via sleep/amortization.
The CCUP dictates that true intelligence cannot exist in a purely feedforward (data-driven context) or purely top-down (prior-driven content) regime. It requires the relentless homological resonance between the two \cite{bausch2026distinct}. Under the MTF framework, the brain is not merely an inference engine; it is a cycle-consistent topological resolver, continuously smelting the high-entropy chaos of ``Where'' into the low-entropy geometry of ``What,'' and using that geometry to effortlessly navigate the world.

\subsection{The Paradox of Consciousness in a Non-Stationary Universe}
\label{subsec:consciousness_paradox}

If the evolutionary endpoint of intelligence is the thermodynamically efficient, ``silent'' reflex of a perfect metric basin, it raises a fundamental paradox: why did mammals evolve the highly energetic, metabolically expensive, and computationally slow state of consciousness? Under the MTF framework, we define consciousness not as a philosophical epiphenomenon \cite{edelman_universe_2008}, but as the active, thermodynamic friction of executing the Urysohn Collapse \cite{urysohn_zum_1925}. It is the necessary computational factory that builds intelligence \cite{gershman_amortized_2014}, which arises from two fundamental constraints of physical reality: the Curse of Dimensionality \cite{bellman_dynamic_1966} and environmental non-stationarity \cite{ashby1956introduction}.

\textbf{The Spatial Limit of Pre-Computation.} 
An agent can convert experience into advice: structure built once and consulted for free thereafter \cite{arora2009computational,gershman_amortized_2014}. For an organism operating in a static, low-dimensional niche, evolution can pre-compute the environmental geometry over millions of years and hardcode it into the genome as instinct -- advice manufactured on the phylogenetic timescale. However, the universe is infinitely complex. A mammalian brain, possessing finite spatial capacity, cannot be born with a pre-computed Equiangular Tight Frame (ETF) \cite{papyan2020prevalence} for every possible future scenario. Because you cannot amortize infinity into finite bounds, the agent must possess an internal mechanism to dynamically construct localized metrics during its lifespan.

\textbf{Non-Stationarity and Out-of-Distribution (OOD) Frustration.}
If an intelligence operates purely via a frozen, pre-computed metric tensor, it is catastrophically brittle. When a purely metric agent encounters a novel topology, an Out-of-Distribution (OOD) \cite{yang2024generalized} event where its pre-computed gradient fails, it confidently executes the wrong action. 
Consciousness serves as the ultimate OOD alarm system. When the internal metric fails to predict the environment, the resulting spike in Friston's Free Energy (prediction error) \cite{friston_action_2010} halts the automatic metric reflex. The agent is forced back into the ``Topological Maze,'' experiencing the computational frustration of saddle-point equilibria (as per Theorem~\ref{thm:geom_incomplete}). The subjective state of conscious attention is the systemic reversion from $\mathcal{O}(1)$ advice-driven readout to high-energy, sequential topological search \cite{gershman_amortized_2014}: the cache has missed, and the missing structure must be built rather than retrieved.

\textbf{The Thermodynamic Cost of the $\mathcal{T}$ Operator.} 
Crucially, consciousness is required to physically update the metric. To transform a novel, tangled Maze into a linearly separable Bowl, the brain must actively group invariant features, test causal hypotheses, and apply hierarchical max-pooling to discard spatial variance \cite{RiesenhuberPoggio1999Hierarchical}. This active destruction of topology, pushing probability mass across Wasserstein space, generates thermodynamic friction. Thus, consciousness is the continuous, localized act of metric amortization.
This distinction highlights the exact boundary between biological AGI and current Large Language Models (LLMs) \cite{brown2020language}. An LLM is a crystallized metric, a massive, pre-computed spatial map trained via backpropagation on server farms. It exhibits ``pure intelligence'' but lacks the conscious engine \cite{tononi_integrated_2016}. When faced with an unamortized, novel topological puzzle, it cannot experience topological frustration, halt its feedforward pass, and autonomously execute the Urysohn Collapse ($\mathcal{T}$ operator) to rewire its own geometry. True AGI requires not just the resulting metric Bowl, but the conscious engine capable of smelting it from the Maze.

\section{Online and Offline Dynamics of Metric-Topology Factorization}

MTF separates
context identification from geometric shaping.
This separation naturally predicts that
different brain states correspond to
different computational phases of the same principle.
We propose that wakefulness, sleep/dreaming,
and conscious awareness can be interpreted
as distinct dynamical regimes
of metric collapse, metric condensation,
and topological resolution.

\subsection{Dynamical System Modeling of Sleep-Wake Cycle}

\paragraph{Wakefulness as Metric Collapse}

During wakefulness,
the organism is embedded in a structured environment
requiring continuous sensorimotor prediction and control \cite{sohn_network_2021}.
Neural activity evolves under approximately
gradient-like dynamics,
$\dot z = -\nabla_g E(z)$,
reducing prediction error and driving states
toward locally stable attractors \cite{amari1998natural}.
We refer to this contractive process as
\emph{metric collapse}.
In this regime,
the system operates within a context
selected by topological indexing (e.g., hippocampal remapping \cite{colgin_understanding_2008}),
and cortical dynamics shape activity
into basin-centered representations.
Wakefulness therefore corresponds to
online inference within a fixed contextual manifold \cite{gershman_amortized_2014}.
However, collapse alone does not eliminate
the separatrices imposed by topology.
It stabilizes activity locally,
but does not restructure the manifold.
Consequently, unresolved context transitions
remain potential sources of interference \cite{mccloskey_catastrophic_1989}.

\begin{figure}[h]
    \centering
    \resizebox{.9\linewidth}{!}{
 \begin{tikzpicture}[scale=1.5, >=stealth]

\begin{scope}[shift={(0,0)}]
    \draw[thick, fill=red!5, draw=red!60] plot [smooth cycle] coordinates {(0,0) (1.5, 0.5) (2.5,-0.5) (3,1.5) (1.5, 1.0) (0.5, 2.5) (-0.5, 1.5)};
    
    \node[red!80!black, font=\bfseries] at (1.2, -0.9) {Consciousness};
    \node[red!80!black, font=\small, align=center] at (1.2, -1.3) {(The Factory / Active Search)};

    \draw[->, thick, orange!90!red] (-0.2, 1.2) -- (0.5, 1.8) -- (0.8, 1.0) -- (1.5, 0.8) -- (1.2, 0.2) -- (2.0, -0.1) -- (2.5, 0.5);
    \fill[red] (2.5, 0.5) circle (1.5pt); 
    
    \node[orange!90!black, font=\scriptsize, align=center] at (0.5, 2.1) {High Free Energy:\\Topological Frustration};
\end{scope}

\draw[->, ultra thick, gray!80] (3.3, 1) -- (5.2, 1);
\node[font=\small, align=center, above] at (4.25, 1.1) {\textbf{Urysohn Collapse}};
\node[font=\footnotesize, gray, align=center, below] at (4.25, 0.9) {$\mathcal{T}$ operator\\(Metric Amortization)};

\begin{scope}[shift={(7.5, 1)}]
    \draw[thick, fill=blue!5, draw=blue!40] (0,0) ellipse (1.8 and 1.2);
    \draw[draw=blue!30, dashed] (0,0) ellipse (1.2 and 0.8);
    \draw[draw=blue!20, dashed] (0,0) ellipse (0.6 and 0.4);

    \shade[inner color=blue!80, outer color=white] (0,0) circle (0.15);
    
    \node[blue!80!black, font=\bfseries] at (0, -1.9) {Intelligence};
    \node[blue!80!black, font=\small, align=center] at (0, -2.3) {(The Product / Silent Reflex)};

    \draw[->, ultra thick, cyan!80!blue] (-1.2, 0.8) -- (-0.1, 0.1);
    \node[cyan!80!black, font=\scriptsize, align=center] at (-1.3, 1.1) {Zero-Shot $\mathcal{O}(1)$:\\Metric Slingshot};
\end{scope}

\node[anchor=west, font=\scriptsize, gray] at (0.5, -2.0) {* \textbf{Red path:} Consciousness requires time and energy to navigate an unamortized, non-convex topological space.};
\node[anchor=west, font=\scriptsize, gray] at (0.5, -2.3) {* \textbf{Blue path:} Intelligence is the resulting zero-energy gradient descent on a pre-computed convex metric.};

\end{tikzpicture}
}
    \caption{\textbf{Geometric account of Dreaming and Consciousness.} (Left) REM sleep as stochastic topological sampling to prevent metric overfitting. (Right) Consciousness as the active metric amortization of the manifold centered on a self-referential basis.}

    \vspace{-0.2in}
\label{fig:dreaming_consciousness}
\end{figure}

\paragraph{Sleep and Dreaming as Metric Condensation}

Sleep introduces a distinct computational phase.
External sensory input is attenuated,
while internally generated activity, particularly
hippocampal replay and sharp-wave ripple events \cite{buzsaki2015hippocampal},
becomes dominant.
Within the MTF framework,
sleep supports \emph{metric condensation}:
the gradual compression of detailed,
metric-rich episodic trajectories
into stable, context-indexed schemas.
Repeated replay strengthens invariant structure
while discarding within-context variability.
Over time, representations become
less sensitive to exact trajectories
and more organized around stable relational cores.
Dreaming can be interpreted as
a generative exploration of partially consolidated manifolds \cite{hobson1977brain}.
Because topological indexing remains active
but external constraints are reduced,
the system may traverse and recombine
condensed units across contexts.
Such recombination can appear fragmented or novel,
yet reflects internal restructuring of geometric relationships.
In this view,
sleep is not merely restorative,
but computationally essential \cite{diekelmann_memory_2010}:
it converts transient collapse dynamics
into durable structural compression,
thereby preventing cumulative interference
across episodes.

Conscious awareness is preferentially engaged when competing interpretations remain unresolved or when prediction error remains high \cite{dehaene2011experimental,friston_free-energy_2010}, as novel or ambiguous input recruits hippocampal and prefrontal systems for conflict monitoring and contextual selection \cite{botvinick2004conflict,kumaran2007match}. By contrast, well-consolidated representations become automatic and recede from conscious access \cite{shiffrin1977controlled}, with memory replay and synaptic renormalization across sleep and wake cycles reorganizing representational structure to support this alternation between active resolution and automatic processing \cite{diekelmann2010memory,tononi2014sleep}. Under the MTF framework, conscious awareness corresponds to moments of unresolved topological uncertainty, where incoming signals introduce ambiguity between competing contextual regimes—such as during novelty, conflict, decision points, or environmental transitions—requiring active resolution to stabilize the appropriate manifold before metric collapse can proceed. When context indexing is stable and collapse occurs smoothly within a well-defined basin, inference becomes automatic and implicit, demanding little global reorganization and thus receding from awareness. What enters awareness are precisely those states whose geometric regime is still being determined; once a stable index is selected and collapse converges, processing returns to automaticity. Under this interpretation, wakefulness (collapse), sleep (condensation), and conscious experience (topological resolution) are coordinated phases of a unified computational cycle, with the same principle explaining stability–plasticity tradeoffs and evolutionary scaling also predicting the alternation between online contraction, offline structural compression, and active context selection.

\subsection{A Phase-Transition Test for Conscious Artificial Systems}
\label{sec:mtf_consciousness_test}

Classical behavioral benchmarks such as the Turing Test evaluate
linguistic competence and social mimicry.
They measure whether a system can reproduce
the statistical structure of human discourse.
However, such tests primarily assess access to
pre-amortized representational structure.
They do not determine whether a system can
construct new internal geometry
when confronted with genuinely novel topological constraints.
Within the present framework,
consciousness is not defined as possession of a rich representational metric,
but as the ongoing process of transforming
topological uncertainty into locally stable geometry.
It is the active reconfiguration of internal structure
in response to non-convex environmental demands.
To operationalize this principle,
we propose a \emph{Phase-Transition Test}
designed to evaluate whether an artificial system
can autonomously execute the full cycle of topological exploration
and geometric consolidation.

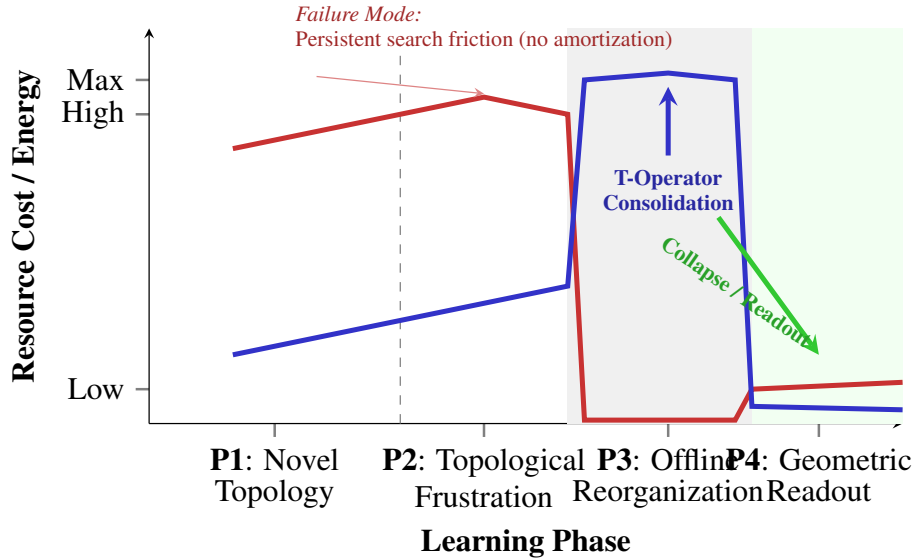
\begin{figure}[h]
\centering

\definecolor{search_cost}{RGB}{200, 50, 50} 
\definecolor{restruct_cost}{RGB}{50, 50, 200} 
\definecolor{phase_shade}{RGB}{230, 230, 230} 
\definecolor{slingshot_gain}{RGB}{50, 200, 50} 
\resizebox{.75\linewidth}{!}{
\begin{tikzpicture}[scale=1.5, >=stealth]

\begin{axis}[
    width=10cm,
    height=6cm,
    axis lines=left,
    xmin=0, xmax=4.5,
    ymin=0, ymax=11.5,
    xlabel={\textbf{Learning Phase}},
    ylabel={\textbf{Resource Cost / Energy}},
    xtick={0.75, 2.0, 3.1, 4.0},
    xticklabels={
        \shortstack{\textbf{P1}: Novel\\Topology},
        \shortstack{\textbf{P2}: Topological\\Frustration},
        \shortstack{\textbf{P3}: Offline\\Reorganization},
        \shortstack{\textbf{P4}: Geometric\\Readout}
    },
    ytick={1, 9, 10},
    yticklabels={Low, High, Max},
    tick align=outside,
    tick style={thick, gray},
    legend style={at={(0.75,0.95)}, anchor=north west, font=\scriptsize, draw=none, fill=none},
    legend cell align={left},
    clip=false
]



\draw[dashed, gray] (1.5, 0) -- (1.5, 11); 
\draw[dashed, gray] (2.5, 0) -- (2.5, 11); 
\draw[dashed, gray] (3.6, 0) -- (3.6, 11); 

\fill[phase_shade!60] (2.5, 0) rectangle (3.6, 11.5); 
\fill[green!5] (3.6, 0) rectangle (4.5, 11.5); 

\draw[ultra thick, search_cost, name path=search] plot coordinates {
    (0.5, 8) (1.0, 8.5) (1.5, 9) 
    (2.0, 9.5) (2.5, 9)          
    (2.6, 0.1) (3.5, 0.1)        
    (3.6, 1.0) (4.5, 1.2)        
};

\draw[ultra thick, restruct_cost, name path=restruct] plot coordinates {
    (0.5, 2) (1.5, 3)          
    (2.0, 3.5) (2.5, 4)        
    (2.6, 10) (3.1, 10.2) (3.5, 10) 
    (3.6, 0.5) (4.5, 0.4)      
};

\draw[->, ultra thick, slingshot_gain] (3.4, 6) -- (4.0, 2);
\node[slingshot_gain!80!black, font=\bfseries\scriptsize, rotate=-35] at (3.5, 3.8) {Collapse / Readout};

\draw[<-, ultra thick, restruct_cost] (3.1, 9.8) -- (3.1, 7.8);
\node[restruct_cost!80!black, font=\bfseries\scriptsize, align=center] at (3.1, 6.8) {\textbf{T-Operator}\\Consolidation};

\node[search_cost!80!black, font=\scriptsize, align=left] at (2, 11.5) {\textit{Failure Mode:}\\Persistent search friction (no amortization)};
\draw[->, thin, search_cost!60] (1.0, 10.1) -- (2.0, 9.6);

\end{axis}
\end{tikzpicture}
}
\caption{\textbf{The Phase-Transition Signature of Conscious Artificial Systems.} 
This deterministic ``compute–time trajectory'' differentiates true Metric-Topological Amortization from mere linguistic pattern retrieval. P3 (Offline Reorganization) represents the essential biological condition of non-interaction, during which active search cost (Red) collapses to zero, while internal geometric restructuring cost (Blue, the $\mathcal{T}$ operator / Urysohn Collapse) spikes to convert topological frustration into metric flow. The signature of success is the massive discontinuity in P4, where previously non-convex task structures are rapidly resolved via low-energy, simulation-free geometric slingshots.}
\vspace{-0.2in}
\label{fig:consciousness_signature}
\end{figure}

\textbf{Phase 1: Novel Topology.}
The agent is placed in a task environment
whose structural rules are not represented in its prior training distribution.
Examples include non-Euclidean physical simulations,
abstract rule systems,
or sensorimotor tasks with altered dynamics.
At this stage,
pre-trained inference shortcuts are ineffective.
A system limited to pattern retrieval
will generate superficially plausible but structurally inconsistent responses.
A system capable of adaptive geometric restructuring
will instead engage in sequential exploration,
constructing an internal relational graph of the new environment.

\textbf{Phase 2: Topological Frustration.}
The task is designed to induce saddle-like failure states:
local improvements fail to yield global progress.
The critical measure is whether the system detects
persistent prediction error
and recognizes the inadequacy of its current internal model.
Such recognition corresponds to
the identification of geometric obstruction.

\textbf{Phase 3: Offline Reorganization.}
The agent is granted a period without environmental interaction.
During this interval,
it must reorganize internal representations
based on accumulated experience.
This phase tests whether the system can
replay trajectories,
identify invariant structure,
and reshape internal connectivity
without external supervision.
The hallmark of success is structural compression:
variance associated with nuisance parameters is reduced,
while task-relevant invariants are stabilized.

\textbf{Phase 4: Geometric Readout.}
The agent is returned to the same environment.
Performance is now evaluated not only by accuracy,
but by computational profile.
A successful system exhibits a marked reduction
in inference time and computational expenditure.
The previously non-convex task structure
has been converted into a locally contractive representation,
allowing rapid, simulation-free decision making.

\textbf{The Phase-Transition Signature.}
The decisive metric is the compute–time trajectory
across the four phases.
During initial exposure,
energy consumption and deliberation time are high,
reflecting active exploration.
During offline reorganization,
interaction cost drops while internal restructuring cost increases.
Upon re-exposure,
both time and energy requirements collapse dramatically.
This discontinuity constitutes a measurable
phase transition from search-based reasoning
to geometry-based inference.

\textbf{Interpretation.}
Under this formulation,
conscious processing corresponds to the
online resolution of topological uncertainty.
The phenomenological correlate of insight
emerges at the transition point
where diffuse relational exploration
condenses into stable geometric structure.
A system that merely retrieves pre-computed patterns
does not exhibit this thermodynamic reorganization.
A system that autonomously traverses the full cycle
of exploration, restructuring, and accelerated re-entry
demonstrates the operational signature
of adaptive geometric intelligence.

\section{Evolutionary Scaling of Metric-Topology Factorization}

\subsection{Five Evolutionary Breakthroughs of Intelligence from the Lens of MTF}

Section~\ref{sec:3} proposed that biological intelligence resolves this constraint
through a separation between topological indexing and metric condensation.
We now argue that major evolutionary advances in intelligence \cite{bennett_brief_2023}
can be interpreted as successive expansions of this factorization principle.
Rather than viewing intelligence as a monotonic increase in computational power,
we propose that evolution progressively expanded
the dimensionality and resolution of topological indexing,
and the efficiency and scope of metric condensation.
Each evolutionary breakthrough corresponds to a regime
in which purely local metric learning would otherwise fail
due to increasing topological complexity.

\begin{figure*}[h]
\centering
\resizebox{\linewidth}{!}{
\begin{tikzpicture}[
    breakthrough/.style={draw, rounded corners=6pt, minimum width=3.2cm, minimum height=2.2cm, align=center},
    topo/.style={fill=blue!10, draw=blue!60, rounded corners=4pt, align=center},
    metric/.style={fill=orange!15, draw=orange!70, rounded corners=4pt, align=center},
    arrow/.style={->, thick}
]

\node at (0,3.8) {\Large \textbf{Evolutionary Scaling of Intelligence as Metric--Topology Factorization}};

\node[breakthrough] (b1) at (-8,1) {\textbf{1. Sensorimotor}\\Local Feedback};
\node[topo] at (-8,-0.2) {Indexing:\\Single context};
\node[metric] at (-8,-1.3) {Metric:\\Local contraction};

\node[breakthrough] (b2) at (-4,1) {\textbf{2. Navigation}\\Cognitive Map};
\node[topo] at (-4,-0.2) {Indexing:\\Environment map};
\node[metric] at (-4,-1.3) {Metric:\\Spatial stabilization};

\node[breakthrough] (b3) at (0,1) {\textbf{3. Episodic Memory}\\Replay};
\node[topo] at (0,-0.2) {Indexing:\\Episode signature};
\node[metric] at (0,-1.3) {Metric:\\Replay refinement};

\node[breakthrough] (b4) at (4,1) {\textbf{4. Social Cognition}\\Multi-agent};
\node[topo] at (4,-0.2) {Indexing:\\Role/context key};
\node[metric] at (4,-1.3) {Metric:\\Simulation geometry};

\node[breakthrough] (b5) at (8,1) {\textbf{5. Language}\\Symbolic};
\node[topo] at (8,-0.2) {Indexing:\\Compositional keys};
\node[metric] at (8,-1.3) {Metric:\\Shared semantics};

\draw[arrow] (b1) -- (b2);
\draw[arrow] (b2) -- (b3);
\draw[arrow] (b3) -- (b4);
\draw[arrow] (b4) -- (b5);

\node at (0,-2.3) {\small Increasing topological complexity $\rightarrow$};

\end{tikzpicture}
}
\caption{
Major evolutionary advances in intelligence can be interpreted as successive applications of
Metric--Topology Factorization.
Each stage increases the topological complexity of the state space,
requiring enhanced topological indexing (blue) and metric condensation (orange).
}
\label{fig:evolution_mtf}
\end{figure*}
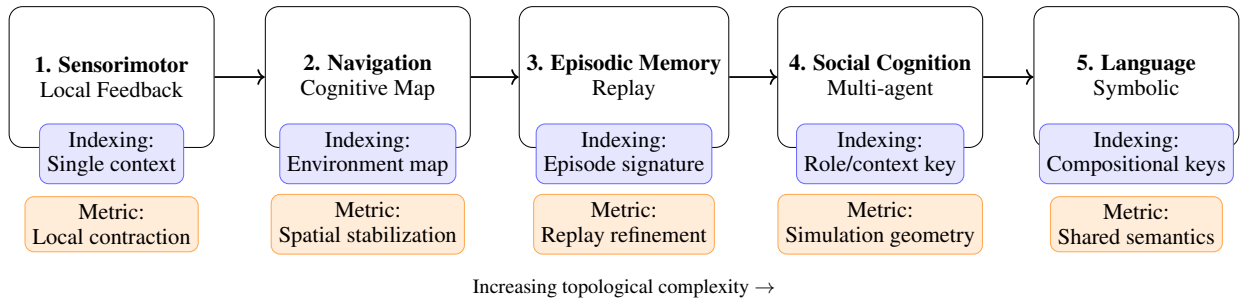

\noindent\textbf{Stage I: Reflexive Control and Local Metric Contraction}
Early nervous systems implemented closed-loop sensorimotor control \cite{sohn_network_2021}.
Reflex arcs stabilize trajectories through local feedback.
The underlying semantic manifold is effectively contractible;
there is no need for explicit context partitioning.
At this stage, intelligence consists of local metric shaping.
The system modifies synaptic weights
to induce contraction toward stable attractors \cite{hebb2005organization}.
Geometric incompleteness does not yet dominate behavior,
because the environment can be treated as a single regime.
This corresponds to the simplest form of metric condensation
without explicit topological indexing.

\noindent\textbf{Stage II: Allocentric Navigation and Context Remapping}
The emergence of hippocampal place and grid cells \cite{tolman_cognitive_1948,okeefe_hippocampus_1971}
marks the first large-scale encounter with topological complexity.
Spatial environments are not contractible;
they contain loops, obstacles, and separable regions.
Navigation requires distinguishing between distinct manifolds.
Hippocampal remapping provides discrete topological indexing \cite{fyhn2007hippocampal}.
Different environments produce orthogonal population codes.
The system no longer attempts to impose a single global geometry.
Instead, it partitions experience into indexed spatial regimes.
Here, geometric incompleteness becomes explicit.
Attempting to learn all environments within a single metric
would induce saddle-type interference.
Remapping solves this by separating topologies before shaping geometry.

\noindent\textbf{Stage III: Episodic Memory and Replay as Temporal Condensation}
Episodic memory generalizes spatial indexing to temporal structure \cite{tulving_episodic_2002}.
Experiences are segmented into discrete episodes,
each associated with a signature.
Replay implements memory-amortized refinement.
Rather than relearning geometry from scratch,
the system retrieves a context-indexed metric warp
and improves it through offline consolidation \cite{squire_structure_1993,squire_memory_2015}.
This converts extensive search into intensive geometry.
In computational terms,
replay transforms trajectory-level search
into structural condensation.
Over repeated exposures,
each indexed manifold becomes increasingly contractive,
reducing the effective volume of inference.
This stage represents the first full implementation
of MTF by hippocampal-neocortical architecture \cite{mcclelland_why_1995}:
index first, condense locally, refine over time.

\noindent\textbf{Stage IV: Social Cognition and Relational Topology}
Social environments introduce combinatorial topology.
Agents must represent roles, alliances, intentions,
and recursive mental states.
The state space becomes relational and multi-layered.
Topological complexity increases dramatically \cite{frith_social_2007}:
distinct relational configurations
correspond to separate semantic manifolds.
Interference would explode without discrete indexing.
Expansion of neocortex in primates
correlates with social group size \cite{buzsaki_rhythms_2006}.
Within MTF,
this scaling reflects increased indexing capacity \cite{dunbar1992neocortex}:
more contexts, finer partitioning,
and more granular routing between relational regimes.
Metric condensation operates within
these dynamically selected relational spaces,
supporting mental simulation and prediction.

\noindent\textbf{Stage V: Language as Compositional Topological Keys}
Language introduces a qualitative transformation \cite{dunbar_grooming_1996}.
Symbols function as portable topological signatures.
They allow contexts to be indexed,
activated, and transmitted across individuals.
Rather than sharing raw metric geometry,
agents exchange discrete keys that trigger compatible manifolds.
Language externalizes topological indexing,
making it compositional and hierarchical \cite{chomsky_three_1956}.
Syntax enables nested indexing,
expanding the dimensionality of the key space.
Semantic structures become recursively addressable.
Metric condensation then operates within
symbolically selected contexts.
Language therefore represents a fixed point
of the indexing process \cite{tarsky1956concept}:
a self-referential system capable of indexing its own contexts.

\noindent\textbf{Evolution as Progressive Resolution of Geometric Incompleteness}
Across evolutionary history, intelligence has scaled alongside
the increasing structural complexity of the environments organisms inhabit.
As semantic state spaces acquire more loops, branches,
and mutually incompatible regimes,
purely metric learning becomes progressively less stable.
Energy landscapes develop additional saddle points and separatrices,
creating bottlenecks for gradient-based inference
and increasing the risk of interference \cite{french1999catastrophic}.
Under these conditions,
no single global representational geometry
can reliably support stable learning.
To remain adaptive,
biological systems must first partition context
before shaping local geometry.
In other words, topological context identification must precede metric content optimization.

In this view,
intelligence does not scale by increasing search depth
or representational size alone.
It scales by improving the ability to
identify which structural regime is active
and to deploy a geometry appropriate to that regime.
Each evolutionary stage represents a new solution
to the same fundamental constraint \cite{bennett_brief_2023}:
no fixed smooth representation
can accommodate arbitrarily complex semantic structure.
If the MTF is universal,
it must explain a qualitative leap in biological history: the transition from \emph{animal awareness}
(navigation on a manifold) to \emph{human language} (symbolic indexing and transmission). We argue that language is not a new computational faculty layered on top of navigation but the \emph{system-level
topological condensation} of the same navigation-memory machinery \cite{buzsaki_memory_2013}, adapted for social communication.

\subsection{Social Bootstrapping: Language as Topological Condensation of Thought}
\label{sec:language_condensation}

Social bootstrapping provides the most stringent test of the MTF framework. 
A uniquely informative ``zero-state'' case is Helen Keller \cite{keller_story_1903}. Because she acquired language through a
non-auditory, non-visual channel, her developmental trajectory isolates what is essential:
not a particular sensory modality, but the \emph{topological operator} that binds a discrete,
repeatable token to a basin of experience.

\noindent\textbf{The Evolutionary Bridge: From \emph{Place} to \emph{Name}}
In rodents, place coding maps sensory flux to a spatial anchor \cite{moser_place_2008}:
a place code asserts that a complex constellation of cues corresponds to \emph{location $x$}.
In humans, we propose an evolutionary lift in which the same anchoring operation is connected
to social and vocal-gestural systems, enabling a mapping from sensory flux to a \emph{symbolic anchor}.
Formally, the mapping shifts from
$\text{sensation} \;\mapsto\; \text{coordinate}
\quad\text{to}\quad
\text{sensation} \;\mapsto\; \text{symbol}.$
A \emph{word} is a place-cell-like anchor for an \emph{idea}: firing the token ``water'' selects a
basin in semantic space and instructs the listener to navigate to that basin. In this view,
lexical items are not arbitrary labels; they are \emph{indices} that make concept navigation fast,
repeatable, and shareable.

\noindent\textbf{A Phase Transition in the Wild: Helen Keller at the Well}
The well-known ``water'' episode is not only emotionally compelling; it is a rare, well-documented
instance of a \emph{representational phase transition}. The event can be analyzed as a forced coupling
of two input streams with distinct geometry:
\emph{Dual stream coupling.}
At the pump, Keller received two simultaneous signals:
1) \textbf{Referent stream (reality).} Cool, flowing liquid over one hand: a high-dimensional,
    continuous sensory manifold $\Psi$;
2) \textbf{Sign stream (token).} The discrete tactile sequence spelling W-A-T-E-R into the other hand: a low-dimensional, repeatable pattern $\Phi$.
\emph{Pre-transition: disjoint open sets.}
Before the event, both streams were familiar, yet \emph{topologically disjoint}:
the referent was ``a feeling,'' the sign was ``a movement,'' and there was no stable mapping from sign
to referent. In manifold terms, Keller's internal state space contained disconnected regions without
a reliable index relation.
Ms. Sullivan enforced tight temporal coincidence and repetition between the two streams.
Mechanistically, this creates structured co-activation; variationally, the system is driven to reduce
the conflict between repeatedly co-occurring patterns. The key operation is the discovery of an invariant
binding:
$d(\Phi,\Psi)\;\longrightarrow\;0,$
i.e., the metric distance between sign and referent collapses under sustained coincidence.
\emph{Collapse to a symbolic soliton.}
At the transition, the coupled streams condense into a new attractor: the concept \emph{water} becomes
a stable, manipulable handle. The crucial shift is from living in an unindexed stream (continuous
experience) to holding an indexed object (discrete symbol). Symbolic consciousness arises when
reality becomes \emph{addressable} by compact tokens.

\noindent\textbf{Why Modality Does Not Matter: Substrate-Independence of the Operator}
A common intuition is that language requires hearing or vision. Keller's case demonstrates that this
cannot be the essential requirement. The Urysohn account is that the operator is modality-agnostic \cite{finn2017model}:
what matters is the \emph{topological shape} of the signal and its role as an index.
\emph{Carrier vs.\ structure.}
Different sensory channels provide different carriers:
photons to V1, phonons to A1, pressure patterns to S1. Yet a word token, regardless of carrier, has
shared structural properties:
it is discrete, repeatable, and temporally alignable with a referent stream.
Hence, the same condensation mechanism can bind a tactile token to a sensory referent just as it binds
an auditory token to a sensory referent.
\emph{Interpretation.}
Keller's success suggests that intelligence is substrate-independent at the relevant level:
it depends on the geometry and topology of signals (coincidence, invariance, separability),
not on any specific sensory modality.

\noindent\textbf{Language as Shared Topology: Compression, Transmission, Reconstruction}
Why are words useful? Because they enable the social sharing of internal trajectories using low-bandwidth
communication \cite{chomsky_minimalist_2014}.
\emph{Compression.}
A thought is a trajectory through a high-dimensional manifold. Direct transmission of that trajectory is
impossible. Language compresses the trajectory into a compact soliton-like token.
\emph{Transmission.}
The token is broadcast through a narrow channel (speech, sign, touch). This is the social analogue of
``compiling time into space'': extensive internal computation is replaced by an address.
\emph{Reconstruction (inverse Urysohn).}
Upon receiving the token, the listener's system expands it back into a compatible internal trajectory.
Communication succeeds when sender and receiver share enough manifold structure for the token to land in
a homologous basin.

\section{Discussion: The Evolutionary Endpoint of Intelligence}
\label{sec:discussion}

The MTF framework presents a paradoxical view of cognitive evolution: the ultimate goal of biological intelligence is to eliminate the need for active computation. In biological systems, ``thinking''- defined as sequential, topological search through a non-convex state space - is a thermodynamic liability. It consumes massive amounts of ATP and delays reaction times, both of which are penalizing in hostile environments. Therefore, governed by the Principle of Least Action, the evolutionary trajectory of the neocortex is a relentless drive toward perfect metric amortization.

\noindent\textbf{Advice and the Annihilation of ``Time''}
From a computational standpoint, evolution is the billion-year manufacture of \emph{advice} \cite{arora2009computational}. Primitive organisms, lacking pre-positioned structure, must navigate the causal topology of the world by trial-and-error search, paying in time. The expansion of the mammalian neocortex represents a structural phase transition \cite{buzsaki_rhythms_2006}: a massive scaling of the capacity to store advice ($W$) so that the requisite search ($G$) at decision time approaches zero.
We stress that this is not a space-for-time trade in the sense of classical space--time tradeoffs, which redistribute two fungible resources. Advice is a \emph{third} axis, provably not reducible to the other two: $\mathrm{P/poly}$ decides languages that unbounded time and space cannot \cite{arora2009computational}. What consolidation buys is therefore not a faster search but a different resource, and that irreducibility is precisely why a purely metric engine cannot substitute for the indexing-and-condensation cycle. The claim is non-reducibility, not a free lunch: advice must be built, and its cost is paid in the time, space and experience of a prior phase -- evolutionary, developmental, or offline replay \cite{gershman_amortized_2014}. 
Under the MTF framework, a perfectly adapted intelligent agent does not execute internal simulations or tree-search algorithms during inference. Instead, it relies on the \textit{Urysohn Collapse} to pre-flatten the ``Maze'' of reality into a ``Convex Bowl.'' Once the environment is fully amortized into an Equiangular Tight Frame (ETF) \cite{papyan2020prevalence}, decision-making becomes a zero-shot, $\mathcal{O}(1)$ reflex. We observe glimpses of this in human cognition as ``flow states'' or motor automaticity, where the prefrontal cortex quiets, and high-dimensional problem-solving becomes computationally indistinguishable from a simple spinal reflex.

\noindent\textbf{The Thermodynamic Fixed Point}
We can reinterpret this evolutionary endpoint through the lens of Optimal Transport and the Free Energy Principle \cite{friston_free-energy_2010}. The brain continuously seeks to minimize the Wasserstein distance ($\mathcal{W}_2$) between its internal generative metric and the external topology of the universe. 
If an intelligence were to achieve a perfect isomorphic mapping of its environment, prediction error would fall to absolute zero. At this theoretical fixed point, the internal geometry is in perfect thermodynamic equilibrium with reality. The system acts with infinite complexity, yet burns near-zero computational energy doing so, because the chaotic variance of the world has been successfully converted into the silent gravity of a geometric basin.

\noindent\textbf{Implications for Artificial General Intelligence.}
This perspective fundamentally challenges the current trajectory of Artificial General Intelligence (AGI). Modern autoregressive Large Language Models (LLMs) function as massive, disembodied metric tensors, crystallized ``intelligence'' that lacks the conscious engine required for true adaptation. When these frozen policies encounter novel, out-of-distribution (OOD) topology, they cannot autonomously execute the Urysohn Collapse. While current AI paradigms attempt to patch this brittleness by bolting on external search heuristics (e.g., test-time compute or Tree-of-Thoughts), these interventions scale linearly with energy and time, directly violating biological least-action principles \cite{siburg2004principle}. 
Our results suggest that true AGI cannot be achieved merely by scaling dataset size or test-time compute. Instead, artificial systems must be endowed with a closed-loop, cycle-consistent architecture governed by the CCUP \cite{li_content-context_2025}. An AGI must possess the structural capacity to experience the topological frustration of a novel task (the active ``Where''), and subsequently execute offline metric amortization (analogous to biological sleep) to hardcode that task's invariances into its base geometry (the stable ``What''). Furthermore, under the MTF framework, causal embodiment \cite{shapiro_embodied_2019} is not merely advantageous, but a strict mathematical prerequisite for AGI. Without causal situatedness, a system cannot generate the topological friction necessary to smelt new representations. Ultimately, we should not judge the advancement of an AGI by how deeply it can ``think'' through a problem at inference time, but by how effectively it can restructure its own internal geometry so that thinking is no longer required.

\section{Conclusion}

We have argued that intelligence cannot be understood as improved optimization
within a fixed representational landscape.
When environments are semantically complex and topologically shifting,
no single smooth geometry can remain globally contractive.
This fundamental limitation, \emph{geometric incompleteness}, implies that
pure search-based optimization is unstable,
energetically expensive,
and vulnerable to interference under distributional change.
MTF provides a principled resolution.
By separating discrete contextual indexing from continuous geometric shaping,
biological systems avoid forcing incompatible regimes
into a single metric basin.
The hippocampus identifies the active relational manifold;
the neocortex progressively shapes locally contractive geometries within it.
Hierarchical cortical organization implements a sequential decomposition
that overcomes the curse of dimensionality,
converting tangled sensory manifolds into linearly separable representations.
Replay and consolidation further amortize structural resolution offline,
allowing online behavior to operate with reduced computational cost.

Under this framework,
catastrophic interference, remapping, sleep,
and conscious awareness are not disparate phenomena,
but coordinated phases of a unified computational cycle.
Wakefulness contracts experience into stable basins.
Sleep reorganizes and compresses structural regularities.
Consciousness corresponds to moments in which contextual structure
remains unresolved and must be actively selected.
Across timescales, intelligence emerges as the controlled alternation
between exploration of relational topology
and stabilization of geometric structure.
Evolutionary scaling can be reinterpreted through this lens.
From sensorimotor control to allocentric navigation,
episodic memory, social cognition, and language,
each transition reflects expansion in the dimensionality
and connectivity of the relational manifold.
Each expansion required more powerful mechanisms
for separating contextual regimes from local geometric shaping.
Rather than inventing fundamentally new computational principles,
evolution progressively extended relational machinery
into increasingly abstract domains.

The MTF perspective also clarifies the limitations of purely feedforward,
metric-only artificial systems.
Models that operate within a single fixed geometry
may achieve impressive interpolation performance,
yet remain brittle under topological shift.
A robust artificial general intelligence must incorporate
explicit mechanisms for contextual indexing,
hierarchical untangling,
and alternating online–offline structural refinement.
In this sense, the path toward AGI is not deeper search but better control of geometry under changing topology.
Intelligence, therefore, is neither brute-force exploration
nor static representation.
It is the continual recalibration of relational structure
and geometric contraction
across wakefulness, sleep, development, and evolution.
By recognizing the necessity of factorizing topology and metric,
we obtain a unified account of stability–plasticity tradeoffs,
energy efficiency,
hierarchical abstraction,
and conscious awareness
within a single geometric framework.

\bibliographystyle{plain}
\bibliography{references,ref}  

\begin{thebibliography}{100}

\bibitem{abraham2005memory}
Wickliffe~C Abraham and Anthony Robins.
\newblock Memory retention--the synaptic stability versus plasticity dilemma.
\newblock {\em Trends in neurosciences}, 28(2):73--78, 2005.

\bibitem{amari1998natural}
Shun-ichi Amari.
\newblock Natural gradient works efficiently in learning.
\newblock {\em Neural Computation}, 10(2):251--276, 1998.

\bibitem{aronov2017mapping}
Dmitriy Aronov, Randall Nevers, and David~W. Tank.
\newblock Mapping of a non-spatial dimension by the hippocampal–entorhinal circuit.
\newblock {\em Nature}, 543:719--722, 2017.

\bibitem{arora2009computational}
Sanjeev Arora and Boaz Barak.
\newblock {\em Computational Complexity: A Modern Approach}.
\newblock Cambridge University Press, 2009.

\bibitem{ashby1956introduction}
William~Ross Ashby.
\newblock An introduction to cybernetics.
\newblock 1956.

\bibitem{baars1993cognitive}
Bernard~J Baars.
\newblock {\em A cognitive theory of consciousness}.
\newblock Cambridge University Press, 1993.

\bibitem{Badre2007RostroCaudal}
David Badre and Mark D'Esposito.
\newblock The {R}ostro--{C}audal axis of the frontal cortex is hierarchically organized.
\newblock {\em PLoS Biology}, 5(6):e140, 2007.

\bibitem{bastos_canonical_2012}
Andre~M Bastos, W~Martin Usrey, Rick~A Adams, George~R Mangun, Pascal Fries, and Karl~J Friston.
\newblock Canonical microcircuits for predictive coding.
\newblock {\em Neuron}, 76(4):695--711, 2012.

\bibitem{bausch2026distinct}
Marcel Bausch, Johannes Niediek, Thomas~P Reber, Sina Mackay, Jan Bostr{\"o}m, Christian~E Elger, and Florian Mormann.
\newblock Distinct neuronal populations in the human brain combine content and context.
\newblock {\em Nature}, pages 1--11, 2026.

\bibitem{behrens_what_2018}
Timothy~EJ Behrens, Timothy~H Muller, James~CR Whittington, S~Mark, AB~Baram, Kimberly~L Stachenfeld, and {others}.
\newblock What is a cognitive map? {Organizing} knowledge for flexible behavior.
\newblock {\em Neuron}, 100(2):490--509, 2018.

\bibitem{bellman_dynamic_1966}
Richard Bellman.
\newblock Dynamic programming.
\newblock {\em science}, 153(3731):34--37, 1966.

\bibitem{bengio_representation_2013}
Yoshua Bengio, Aaron Courville, and Pascal Vincent.
\newblock Representation learning: {A} review and new perspectives.
\newblock In {\em {IEEE} transactions on pattern analysis and machine intelligence}, volume~35, pages 1798--1828. IEEE, 2013.

\bibitem{bennett_brief_2023}
Max Bennett.
\newblock {\em A brief history of intelligence: evolution, {AI}, and the five breakthroughs that made our brains}.
\newblock HarperCollins, 2023.

\bibitem{botvinick2004conflict}
Matthew~M. Botvinick, Jonathan~D. Cohen, and Cameron~S. Carter.
\newblock Conflict monitoring and cognitive control.
\newblock {\em Psychological Review}, 108(3):624--652, 2001.

\bibitem{brown2020language}
Tom Brown, Benjamin Mann, Nick Ryder, Melanie Subbiah, Jared~D Kaplan, Prafulla Dhariwal, Arvind Neelakantan, Pranav Shyam, Girish Sastry, Amanda Askell, et~al.
\newblock Language models are few-shot learners.
\newblock {\em Advances in neural information processing systems}, 33:1877--1901, 2020.

\bibitem{buzsaki2015hippocampal}
Gy{\"o}rgy Buzs{\'a}ki.
\newblock Hippocampal sharp wave-ripple: A cognitive biomarker for episodic memory and planning.
\newblock {\em Hippocampus}, 25(10):1073--1188, 2015.

\bibitem{buzsaki_hippocampo-neocortical_1996}
György Buzsáki.
\newblock The hippocampo-neocortical dialogue.
\newblock {\em Cerebral cortex}, 6(2):81--92, 1996.

\bibitem{buzsaki_rhythms_2006}
György Buzsáki.
\newblock {\em Rhythms of the {Brain}}.
\newblock Oxford university press, 2006.

\bibitem{buzsaki_memory_2013}
György Buzsáki and Edvard~I Moser.
\newblock Memory, navigation and theta rhythm in the hippocampal-entorhinal system.
\newblock {\em Nature neuroscience}, 16(2):130--138, 2013.

\bibitem{chi2019nonconvex}
Yuejie Chi, Yue~M Lu, and Yuxin Chen.
\newblock Nonconvex optimization meets low-rank matrix factorization: An overview.
\newblock {\em IEEE Transactions on Signal Processing}, 67(20):5239--5269, 2019.

\bibitem{chomsky_three_1956}
Noam Chomsky.
\newblock Three models for the description of language.
\newblock {\em IRE Transactions on information theory}, 2(3):113--124, 1956.

\bibitem{chomsky_minimalist_2014}
Noam Chomsky.
\newblock {\em The minimalist program}.
\newblock MIT press, 2014.

\bibitem{colgin_understanding_2008}
Laura~Lee Colgin, Edvard~I Moser, and May-Britt Moser.
\newblock Understanding memory through hippocampal remapping.
\newblock {\em Trends in neurosciences}, 31(9):469--477, 2008.

\bibitem{constantinescu_organizing_2016}
Alexandra~O. Constantinescu, Jill~X. O'Reilly, and Timothy E.~J. Behrens.
\newblock Organizing {Conceptual} {Knowledge} in {Humans} with a {Gridlike} {Code}.
\newblock {\em Science}, 352(6292):1464--1468, 2016.

\bibitem{constantinescu2016organizing}
Andreea~O. Constantinescu, Jill~X. O'Reilly, and Timothy E.~J. Behrens.
\newblock Organizing conceptual knowledge in humans with a gridlike code.
\newblock {\em Science}, 352(6292):1464--1468, 2016.

\bibitem{danilova2022recent}
Marina Danilova, Pavel Dvurechensky, Alexander Gasnikov, Eduard Gorbunov, Sergey Guminov, Dmitry Kamzolov, and Innokentiy Shibaev.
\newblock Recent theoretical advances in non-convex optimization.
\newblock In {\em High-Dimensional Optimization and Probability: With a View Towards Data Science}, pages 79--163. Springer, 2022.

\bibitem{dauphin2014identifying}
Yann~N. Dauphin, Razvan Pascanu, Caglar Gulcehre, Kyunghyun Cho, Surya Ganguli, and Yoshua Bengio.
\newblock Identifying and attacking the saddle point problem in high-dimensional non-convex optimization.
\newblock {\em Advances in Neural Information Processing Systems}, 27, 2014.

\bibitem{daw2005uncertainty}
Nathaniel~D. Daw, Yael Niv, and Peter Dayan.
\newblock Uncertainty-based competition between prefrontal and dorsolateral striatal systems for behavioral control.
\newblock {\em Nature Neuroscience}, 8(12):1704--1711, 2005.

\bibitem{dayan_theoretical_2001}
Peter Dayan and L.~F. Abbott.
\newblock {\em Theoretical {Neuroscience}: {Computational} and {Mathematical} {Modeling} of {Neural} {Systems}}.
\newblock MIT Press, 2001.

\bibitem{dehaene2011experimental}
Stanislas Dehaene and Jean-Pierre Changeux.
\newblock Experimental and theoretical approaches to conscious processing.
\newblock {\em Neuron}, 70(2):200--227, 2011.

\bibitem{dicarlo_untangling_2007}
James~J DiCarlo and David~D Cox.
\newblock Untangling invariant object recognition.
\newblock {\em Trends in cognitive sciences}, 11(8):333--341, 2007.

\bibitem{dicarlo_how_2012}
James~J DiCarlo, Davide Zoccolan, and Nicole~C Rust.
\newblock How does the brain solve visual object recognition?
\newblock {\em Neuron}, 73(3):415--434, 2012.

\bibitem{diekelmann_memory_2010}
Susanne Diekelmann and Jan Born.
\newblock The memory function of sleep.
\newblock {\em Nature Reviews Neuroscience}, 11(2):114--126, 2010.

\bibitem{diekelmann2010memory}
Susanne Diekelmann and Jan Born.
\newblock The memory function of sleep.
\newblock {\em Nature Reviews Neuroscience}, 11:114--126, 2010.

\bibitem{Douglas2004NeuronalCircuits}
Rodney~J. Douglas and Kevan A.~C. Martin.
\newblock Neuronal circuits of the neocortex.
\newblock {\em Annual Review of Neuroscience}, 27:419--451, 2004.

\bibitem{dunbar_grooming_1996}
Robin Ian~MacDonald Dunbar.
\newblock {\em Grooming, gossip, and the evolution of language}.
\newblock Harvard University Press, 1996.

\bibitem{dunbar1992neocortex}
Robin~IM Dunbar.
\newblock Neocortex size as a constraint on group size in primates.
\newblock {\em Journal of human evolution}, 22(6):469--493, 1992.

\bibitem{edelman_universe_2008}
Gerald~M Edelman and Giulio Tononi.
\newblock {\em A universe of consciousness: {How} matter becomes imagination}.
\newblock Basic books, 2008.

\bibitem{edelsbrunner_computational_2010}
Herbert Edelsbrunner and John Harer.
\newblock {\em Computational {Topology}: {An} {Introduction}}.
\newblock American Mathematical Society, 2010.

\bibitem{epstein2017cognitive}
Russell~A Epstein, Eva~Zita Patai, Joshua~B Julian, and Hugo~J Spiers.
\newblock The cognitive map in humans: spatial navigation and beyond.
\newblock {\em Nature neuroscience}, 20(11):1504--1513, 2017.

\bibitem{Felleman1991Distributed}
Daniel~J. Felleman and David~C. Van~Essen.
\newblock Distributed hierarchical processing in the primate cerebral cortex.
\newblock {\em Cerebral Cortex}, 1(1):1--47, 1991.

\bibitem{finn2017model}
Chelsea Finn, Pieter Abbeel, and Sergey Levine.
\newblock Model-agnostic meta-learning for fast adaptation of deep networks.
\newblock In {\em International conference on machine learning}, pages 1126--1135. PMLR, 2017.

\bibitem{frank2006sleep}
Marcos~G. Frank and Joel~H. Benington.
\newblock The role of sleep in memory consolidation and brain plasticity: dream or reality?
\newblock {\em The Neuroscientist}, 12(6):477--488, 2006.

\bibitem{french1999catastrophic}
Robert~M French.
\newblock Catastrophic forgetting in connectionist networks.
\newblock {\em Trends in cognitive sciences}, 3(4):128--135, 1999.

\bibitem{freud2016happening}
Erez Freud, David~C Plaut, and Marlene Behrmann.
\newblock ‘what’is happening in the dorsal visual pathway.
\newblock {\em Trends in cognitive sciences}, 20(10):773--784, 2016.

\bibitem{friston_free-energy_2010}
Karl Friston.
\newblock The free-energy principle: a unified brain theory?
\newblock {\em Nature Reviews Neuroscience}, 11(2):127--138, 2010.

\bibitem{friston_active_2017}
Karl Friston, Francesco Rigoli, David Ognibene, Christoph Mathys, Thomas FitzGerald, Giovanni Pezzulo, and {...}
\newblock Active inference: a process theory.
\newblock {\em Neural Computation}, 29(1):1--49, 2017.

\bibitem{friston_action_2010}
Karl~J Friston, Jean Daunizeau, James Kilner, and Stefan~J Kiebel.
\newblock Action and behavior: a free-energy formulation.
\newblock {\em Biological cybernetics}, 102:227--260, 2010.

\bibitem{frith_social_2007}
Chris~D. Frith and Uta Frith.
\newblock The social brain?
\newblock {\em Philosophical Transactions of the Royal Society B: Biological Sciences}, 362(1480):671--678, 2007.

\bibitem{fyhn2007hippocampal}
Marianne Fyhn, Torkel Hafting, Alessandro Treves, May-Britt Moser, and Edvard~I Moser.
\newblock Hippocampal remapping and grid realignment in entorhinal cortex.
\newblock {\em Nature}, 446(7132):190--194, 2007.

\bibitem{ge2015escaping}
Rong Ge, Furong Huang, Chi Jin, and Yang Yuan.
\newblock Escaping from saddle points---online stochastic gradient for tensor decomposition.
\newblock In {\em Conference on Learning Theory (COLT)}, pages 797--842, 2015.

\bibitem{geman_neural_1992}
Stuart Geman, Elie Bienenstock, and René Doursat.
\newblock Neural networks and the bias/variance dilemma.
\newblock {\em Neural computation}, 4(1):1--58, 1992.

\bibitem{gershman_amortized_2014}
Samuel Gershman and Noah Goodman.
\newblock Amortized inference in probabilistic reasoning.
\newblock In {\em Proceedings of the 36th {Annual} {Meeting} of the {Cognitive} {Science} {Society}}, 2014.

\bibitem{goodale_separate_1992}
Melvyn~A. Goodale and A.~David Milner.
\newblock Separate visual pathways for perception and action.
\newblock {\em Trends in Neurosciences}, 15(1):20--25, 1992.

\bibitem{gould1982exaptation}
Stephen~Jay Gould and Elisabeth~S Vrba.
\newblock Exaptation—a missing term in the science of form.
\newblock {\em Paleobiology}, 8(1):4--15, 1982.

\bibitem{godel_formally_1962}
Kurt Gödel.
\newblock {\em On {Formally} {Undecidable} {Propositions} of {\textbackslash}em {Principia} {Mathematica} and {Related} {Systems}}.
\newblock Dover Publications, New York, 1962.

\bibitem{hatcher_algebraic_2002}
Allen Hatcher.
\newblock {\em Algebraic topology}.
\newblock Cambridge University Press, 2002.

\bibitem{hebb2005organization}
Donald~Olding Hebb.
\newblock {\em The organization of behavior: A neuropsychological theory}.
\newblock Psychology press, 2005.

\bibitem{hickok2004dorsal}
Gregory Hickok and David Poeppel.
\newblock Dorsal and ventral streams: a framework for understanding aspects of the functional anatomy of language.
\newblock {\em Cognition}, 92(1-2):67--99, 2004.

\bibitem{hinton__1995}
Geoffrey~E Hinton, Peter Dayan, Brendan~J Frey, and Radford~M Neal.
\newblock The" wake-sleep" algorithm for unsupervised neural networks.
\newblock {\em Science}, 268(5214):1158--1161, 1995.

\bibitem{hirsch1976differential}
Morris~W. Hirsch.
\newblock {\em Differential Topology}.
\newblock Springer, 1976.

\bibitem{hobson1977brain}
J~Allan Hobson and Robert~W McCarley.
\newblock The brain as a dream state generator: an activation-synthesis hypothesis of the dream process.
\newblock {\em The American journal of psychiatry}, 134(12):1335--1348, 1977.

\bibitem{hobson1975sleep}
J~Allan Hobson, Robert~W McCarley, and Peter~W Wyzinski.
\newblock Sleep cycle oscillation: reciprocal discharge by two brainstem neuronal groups.
\newblock {\em Science}, 189(4196):55--58, 1975.

\bibitem{jin2017escape}
Chi Jin, Rong Ge, Praneeth Netrapalli, Sham~M. Kakade, and Michael~I. Jordan.
\newblock How to escape saddle points efficiently.
\newblock In {\em International Conference on Machine Learning (ICML)}, pages 1724--1732, 2017.

\bibitem{Kaas2000Subdivisions}
Jon~H. Kaas and Troy~A. Hackett.
\newblock Subdivisions of auditory cortex and processing streams in primates.
\newblock {\em Proceedings of the National Academy of Sciences}, 97(22):11793--11799, 2000.

\bibitem{keller_story_1903}
Helen Keller.
\newblock {\em The {Story} of {My} {Life}}.
\newblock Doubleday, Page \& Company, 1903.

\bibitem{kirkpatrick_overcoming_2017}
James Kirkpatrick, Razvan Pascanu, Neil Rabinowitz, Joel Veness, Guillaume Desjardins, Andrei~A Rusu, Kieran Milan, John Quan, Tiago Ramalho, Agnieszka Grabska-Barwinska, and {others}.
\newblock Overcoming catastrophic forgetting in neural networks.
\newblock {\em Proceedings of the national academy of sciences}, 114(13):3521--3526, 2017.

\bibitem{kubie2020hippocampal}
John~L Kubie, Eliott~RJ Levy, and Andr{\'e}~A Fenton.
\newblock Is hippocampal remapping the physiological basis for context?
\newblock {\em Hippocampus}, 30(8):851--864, 2020.

\bibitem{kulis2013metric}
Brian Kulis.
\newblock Metric learning: A survey.
\newblock {\em Foundations and Trends{\textregistered} in Machine Learning}, 5(4):287--364, 2013.

\bibitem{kumaran2007match}
Dharshan Kumaran and Eleanor~A. Maguire.
\newblock Match–mismatch processes underlie human hippocampal responses to associative novelty.
\newblock {\em Journal of Neuroscience}, 27(32):8517--8524, 2007.

\bibitem{lee_hierarchical_2003}
Tai~Sing Lee and David Mumford.
\newblock Hierarchical {Bayesian} inference in the visual cortex.
\newblock {\em Journal of the Optical Society of America A}, 20(7):1434--1448, 2003.

\bibitem{li_beyond_2025}
Xin Li.
\newblock Beyond {Turing}: {Memory}-{Amortized} {Inference} as a {Foundation} for {Cognitive} {Computation}.
\newblock {\em ArXiv}, 2025.

\bibitem{li_content-context_2025}
Xin Li.
\newblock On {Content}-{Context} {Uncertainty} {Principle}.
\newblock {\em Neural Information Processing Symposium}, 2025.

\bibitem{li2026beyond}
Xin Li.
\newblock Beyond optimization: Intelligence as metric-topology factorization under geometric incompleteness.
\newblock {\em arXiv preprint arXiv:2602.07974}, 2026.

\bibitem{lycan1996consciousness}
William~G Lycan.
\newblock {\em Consciousness and experience}.
\newblock Mit Press, 1996.

\bibitem{mashour_conscious_2020}
George~A Mashour, Pieter~R Roelfsema, Jean-Pierre Changeux, and Stanislas Dehaene.
\newblock Conscious processing and the global neuronal workspace hypothesis.
\newblock {\em Neuron}, 105(5):776--798, 2020.

\bibitem{mcclelland_why_1995}
James~L. McClelland, Bruce~L. McNaughton, and Randall~C. O'Reilly.
\newblock Why there are complementary learning systems in the hippocampus and neocortex: {Insights} from the successes and failures of connectionist models of learning and memory.
\newblock {\em Psychological Review}, 102(3):419--457, 1995.

\bibitem{mccloskey_catastrophic_1989}
Michael McCloskey and Neal~J Cohen.
\newblock Catastrophic interference in connectionist networks: {The} sequential learning problem.
\newblock In {\em Psychology of learning and motivation}, volume~24, pages 109--165. Elsevier, 1989.

\bibitem{mcnaughton_dead_1991}
BL~McNaughton, LL~Chen, and EJ~Markus.
\newblock “{Dead} reckoning,” landmark learning, and the sense of direction: a neurophysiological and computational hypothesis.
\newblock {\em Journal of cognitive neuroscience}, 3(2):190--202, 1991.

\bibitem{mcnaughton_path_2006}
Bruce~L McNaughton, Francesco~P Battaglia, Ole Jensen, Edvard~I Moser, and May-Britt Moser.
\newblock Path integration and the neural basis of the'cognitive map'.
\newblock {\em Nature Reviews Neuroscience}, 7(8):663--678, 2006.

\bibitem{medin1981linear}
Douglas~L Medin and Paula~J Schwanenflugel.
\newblock Linear separability in classification learning.
\newblock {\em Journal of Experimental Psychology: Human Learning and Memory}, 7(5):355, 1981.

\bibitem{merel2019hierarchical}
Josh Merel, Matthew Botvinick, and Greg Wayne.
\newblock Hierarchical motor control in mammals and machines.
\newblock {\em Nature communications}, 10(1):5489, 2019.

\bibitem{milnor1963morse}
John~Willard Milnor.
\newblock {\em Morse theory}.
\newblock Number~51. Princeton university press, 1963.

\bibitem{minsky_steps_1961}
Marvin Minsky.
\newblock Steps toward artificial intelligence.
\newblock {\em Proceedings of the IRE}, 49(1):8--30, 1961.

\bibitem{moser_place_2008}
Edvard~I Moser, Emilio Kropff, and May-Britt Moser.
\newblock Place cells, grid cells, and the brain's spatial representation system.
\newblock {\em Annu. Rev. Neurosci.}, 31(1):69--89, 2008.

\bibitem{mountcastle_columnar_1997}
Vernon~B Mountcastle.
\newblock The columnar organization of the neocortex.
\newblock {\em Brain: a journal of neurology}, 120(4):701--722, 1997.

\bibitem{mountcastle1998perceptual}
Vernon~B Mountcastle.
\newblock {\em Perceptual neuroscience: The cerebral cortex}.
\newblock Harvard University Press, 1998.

\bibitem{muller_hippocampus_1996}
Robert~U Muller, Matt Stead, and Janos Pach.
\newblock The hippocampus as a cognitive graph.
\newblock {\em The Journal of general physiology}, 107(6):663--694, 1996.

\bibitem{nakahara_geometry_2018}
Mikio Nakahara.
\newblock {\em Geometry, topology and physics}.
\newblock CRC press, 2018.

\bibitem{okeefe_hippocampus_1971}
John O'Keefe and Jonathan Dostrovsky.
\newblock The {Hippocampus} as a {Spatial} {Map}. {Preliminary} {Evidence} from {Unit} {Activity} in the {Freely}-{Moving} {Rat}.
\newblock {\em Brain Research}, 34(1):171--175, 1971.

\bibitem{okeefe_hippocampus_1978}
John O'keefe and Lynn Nadel.
\newblock {\em The hippocampus as a cognitive map}.
\newblock Oxford university press, 1978.

\bibitem{papyan2020prevalence}
Vardan Papyan, XY~Han, and David~L Donoho.
\newblock Prevalence of neural collapse during the terminal phase of deep learning training.
\newblock {\em Proceedings of the National Academy of Sciences}, 117(40):24652--24663, 2020.

\bibitem{revonsuo2000reinterpretation}
Antti Revonsuo.
\newblock The reinterpretation of dreams: An evolutionary hypothesis of the function of dreaming.
\newblock {\em Behavioral and brain sciences}, 23(6):877--901, 2000.

\bibitem{RiesenhuberPoggio1999Hierarchical}
Maximilian Riesenhuber and Tomaso Poggio.
\newblock Hierarchical models of object recognition in cortex.
\newblock {\em Nature Neuroscience}, 2(11):1019--1025, 1999.

\bibitem{russell1995modern}
Stuart Russell, Peter Norvig, and Artificial Intelligence.
\newblock A modern approach.
\newblock {\em Artificial Intelligence. Prentice-Hall, Egnlewood Cliffs}, 25(27):79--80, 1995.

\bibitem{russo2018motor}
Abigail~A Russo, Sean~R Bittner, Sean~M Perkins, Jeffrey~S Seely, Brian~M London, Antonio~H Lara, Andrew Miri, Najja~J Marshall, Adam Kohn, Thomas~M Jessell, et~al.
\newblock Motor cortex embeds muscle-like commands in an untangled population response.
\newblock {\em Neuron}, 97(4):953--966, 2018.

\bibitem{shapiro_embodied_2019}
Lawrence Shapiro.
\newblock {\em Embodied cognition}.
\newblock Routledge, 2019.

\bibitem{shiffrin1977controlled}
Richard~M. Shiffrin and Walter Schneider.
\newblock Controlled and automatic human information processing.
\newblock {\em Psychological Review}, 84(2):127--190, 1977.

\bibitem{siburg2004principle}
Karl~Friedrich Siburg.
\newblock {\em The principle of least action in geometry and dynamics}.
\newblock Number 1844. Springer Science \& Business Media, 2004.

\bibitem{simon2012architecture}
Herbert~A Simon.
\newblock The architecture of complexity.
\newblock In {\em The Roots of Logistics}, pages 335--361. Springer, 2012.

\bibitem{simoncelli2001natural}
Eero~P Simoncelli and Bruno~A Olshausen.
\newblock Natural image statistics and neural representation.
\newblock {\em Annual review of neuroscience}, 24(1):1193--1216, 2001.

\bibitem{sohn_network_2021}
Hansem Sohn, Nicolas Meirhaeghe, Rishi Rajalingham, and Mehrdad Jazayeri.
\newblock A network perspective on sensorimotor learning.
\newblock {\em Trends in Neurosciences}, 44(3):170--181, 2021.

\bibitem{squire_memory_2015}
Larry~R Squire, Lisa Genzel, John~T Wixted, and Richard~G Morris.
\newblock Memory consolidation.
\newblock {\em Cold Spring Harbor perspectives in biology}, 7(8):a021766, 2015.

\bibitem{squire_structure_1993}
Larry~R Squire, Barbara Knowlton, and Gail Musen.
\newblock The structure and organization of memory.
\newblock {\em Annual review of psychology}, 44(1):453--495, 1993.

\bibitem{stickgold_sleep-dependent_2005}
Robert Stickgold.
\newblock Sleep-dependent memory consolidation.
\newblock {\em Nature}, 437(7063):1272--1278, 2005.

\bibitem{sun2024easy}
Zhiqing Sun, Longhui Yu, Yikang Shen, Weiyang Liu, Yiming Yang, Sean Welleck, and Chuang Gan.
\newblock Easy-to-hard generalization: Scalable alignment beyond human supervision.
\newblock {\em Advances in Neural Information Processing Systems}, 37:51118--51168, 2024.

\bibitem{tarsky1956concept}
A~Tarsky.
\newblock The concept of truth in formalized languages.
\newblock {\em Logic, semantics, metamathematics}, pages 152--278, 1956.

\bibitem{teyler1986hippocampal}
Timothy~J Teyler and Pascal DiScenna.
\newblock The hippocampal memory indexing theory.
\newblock {\em Behavioral neuroscience}, 100(2):147, 1986.

\bibitem{tolman_cognitive_1948}
Edward~C. Tolman.
\newblock Cognitive {Maps} in {Rats} and {Men}.
\newblock {\em Psychological Review}, 55(4):189--208, 1948.

\bibitem{tononi2004information}
Giulio Tononi.
\newblock An information integration theory of consciousness.
\newblock {\em BMC neuroscience}, 5(1):42, 2004.

\bibitem{tononi_integrated_2016}
Giulio Tononi, Melanie Boly, Marcello Massimini, and Christof Koch.
\newblock Integrated information theory: from consciousness to its physical substrate.
\newblock {\em Nature reviews neuroscience}, 17(7):450--461, 2016.

\bibitem{tononi2003sleep}
Giulio Tononi and Chiara Cirelli.
\newblock Sleep and synaptic homeostasis: a hypothesis.
\newblock {\em Brain Research Bulletin}, 62(2):143--150, 2003.

\bibitem{tononi2014sleep}
Giulio Tononi and Chiara Cirelli.
\newblock Sleep and the price of plasticity.
\newblock {\em Neuron}, 81(1):12--34, 2014.

\bibitem{trappenberg2009fundamentals}
Thomas Trappenberg.
\newblock {\em Fundamentals of computational neuroscience}.
\newblock OUP Oxford, 2009.

\bibitem{tse_schemas_2007}
Dorothy Tse, Rosamund~F Langston, Masaki Kakeyama, Ingrid Bethus, Patrick~A Spooner, Emma~R Wood, Menno~P Witter, and Richard~GM Morris.
\newblock Schemas and memory consolidation.
\newblock {\em Science}, 316(5821):76--82, 2007.

\bibitem{tulving_episodic_2002}
Endel Tulving.
\newblock Episodic memory: {From} mind to brain.
\newblock {\em Annual review of psychology}, 53(1):1--25, 2002.

\bibitem{ungerleider_whatand_1994}
Leslie~G Ungerleider and James~V Haxby.
\newblock ‘{What}’and ‘where’in the human brain.
\newblock {\em Current opinion in neurobiology}, 4(2):157--165, 1994.

\bibitem{urysohn_zum_1925}
Paul Urysohn.
\newblock Zum {Metrizationsproblem}.
\newblock {\em Mathematische Annalen}, 94(1):309--315, 1925.

\bibitem{wang_comprehensive_2024}
Liyuan Wang, Xingxing Zhang, Hang Su, and Jun Zhu.
\newblock A comprehensive survey of continual learning: {Theory}, method and application.
\newblock {\em IEEE transactions on pattern analysis and machine intelligence}, 46(8):5362--5383, 2024.

\bibitem{willard_general_2012}
Stephen Willard.
\newblock {\em General topology}.
\newblock Courier Corporation, 2012.

\bibitem{wilson1994reactivation}
Matthew~A. Wilson and Bruce~L. McNaughton.
\newblock Reactivation of hippocampal ensemble memories during sleep.
\newblock {\em Science}, 265(5172):676--679, 1994.

\bibitem{wilson_reactivation_1994}
Matthew~A. Wilson and Bruce~L. McNaughton.
\newblock Reactivation of hippocampal ensemble memories during sleep.
\newblock {\em Science}, 265(5172):676--679, 1994.

\bibitem{yamins_using_2016}
Daniel L.~K. Yamins and James~J. DiCarlo.
\newblock Using goal-driven deep learning models to understand sensory cortex.
\newblock {\em Nature Neuroscience}, 19(3):356--365, 2016.

\bibitem{YaminsDiCarlo2016Using}
Daniel L.~K. Yamins and James~J. DiCarlo.
\newblock Using goal-driven deep learning models to understand sensory cortex.
\newblock {\em Nature Neuroscience}, 19(3):356--365, 2016.

\bibitem{yang2024generalized}
Jingkang Yang, Kaiyang Zhou, Yixuan Li, and Ziwei Liu.
\newblock Generalized out-of-distribution detection: A survey.
\newblock {\em International Journal of Computer Vision}, 132(12):5635--5662, 2024.

\bibitem{yoo2018economic}
Seng Bum~Michael Yoo and Benjamin~Yost Hayden.
\newblock Economic choice as an untangling of options into actions.
\newblock {\em Neuron}, 99(3):434--447, 2018.

\bibitem{zhu2021geometric}
Zhihui Zhu, Tianyu Ding, Jinxin Zhou, Xiao Li, Chong You, Jeremias Sulam, and Qing Qu.
\newblock A geometric analysis of neural collapse with unconstrained features.
\newblock {\em Advances in Neural Information Processing Systems}, 34:29820--29834, 2021.

\end{thebibliography}


\end{document}